\documentclass[a4paper,UKenglish,cleveref, autoref, thm-restate]{lipics-v2021}

\usepackage{todonotes}
\usepackage{xspace}
\usepackage{appendix}
\usepackage{enumitem}

\def\NN{\mathbb{N}}	

\def\CP{\mathcal{P} }
\def\CQ{\mathcal{Q} }

\newcommand{\cost}{\textrm{cost}}

\newcommand{\STC}{\mathrm{STC}}     
\newcommand{\CMC}{\mathrm{MCMC}}

\newcommand{\initOneLiners}{%
    \setlength{\itemsep}{0pt}
    \setlength{\parsep }{0pt}
    \setlength{\topsep }{0pt}
}

\usepackage{pgffor}
\foreach \x in {A,...,Z}{%
	\expandafter\xdef\csname b\x\endcsname{\noexpand\ensuremath{\noexpand\mathbf{\x}}}
	\expandafter\xdef\csname c\x\endcsname{\noexpand\ensuremath{\noexpand\mathcal{\x}}}
	\expandafter\xdef\csname B\x\endcsname{\noexpand\ensuremath{\noexpand\mathbb{\x}}}
}

\foreach \x in {a,...,z}{%
	\expandafter\xdef\csname b\x\endcsname{\noexpand\ensuremath{\noexpand\mathbf{\x}}}
}

\newcommand{\Oh}[1]{\ensuremath{\cO}\left(#1\right)}
\def\CC{\ensuremath{\mathcal{C}}}

\newtheorem*{problem*}{Problem}

\title{Min-Max Connected Multiway Cut} 

\author{Petr Kolman}{Charles University, Faculty of Mathematics and Physics, Department of Applied Mathematics, Czechia \and \url{https://kam.mff.cuni.cz/~kolman/} }{kolman@kam.mff.cuni.cz}{https://orcid.org/0000-0003-2235-0506}{Partially supported by grant 24-10306S of GA \v{C}R.}

\author{Hans Raj {Tiwary}}{Charles University, Faculty of Mathematics and Physics, Department of Applied Mathematics, Czechia \and \url{https://kam.mff.cuni.cz/~hansraj/}}{hansraj@kam.mff.cuni.cz}{https://orcid.org/0000-0003-1903-1600}{Partially supported by the AGATE project funded  from the Horizon Europe Programme under Grant Agreement No. 101183743.}

\authorrunning{P. Kolman and H.\,R. Tiwary} 

\Copyright{Petr Kolman and Hans Raj Tiwary} 

\ccsdesc{Mathematics of computing~Discrete mathematics}
\ccsdesc{Theory of computation~Design and analysis of algorithms}

\keywords{Graphs, Partitions, Clustering, Complexity, Algorithms, Approximations} 

\category{} 

\relatedversion{} 

\acknowledgements{}

\nolinenumbers 

\begin{document}
\setuptodonotes{color=red!20}

\maketitle

\begin{abstract}
We introduce a variant of the multiway cut that we call the min-max connected multiway cut. Given a graph $G=(V,E)$ and a set $\Gamma\subseteq V$ of $t$ terminals, partition $V$ into $t$ parts such that each part is connected and contains exactly one terminal; the objective is to minimize the maximum weight of  the edges leaving any
part of the partition. This problem is a natural modification of the standard multiway cut problem and
it differs from it in two ways: first, the cost of a partition is defined to be the maximum size of the boundary of any part, as opposed to the sum of all boundaries, and second, the subgraph induced by each part is required to be connected. Although the modified objective function has been considered before in the literature under the name min-max multiway cut, the requirement on each component to be connected has not been studied as far as we know. Our main motivation for studying this problems is its close connection with the spanning tree congestion problem that has been extensively studied but on which little progress has been made.

We show various hardness results for this problem, including a proof of weak NP-hardness of
the weighted version of the problem on graphs with treewidth two, and $W[1]$-hardness for the problem when parameterized by the treewidth of the graph. Complementing our lower bounds tightly, we also provide a pseudopolynomial time algorithm as well as an FPTAS for the weighted problem on graphs of bounded treewidth. As a consequence of our investigation we also show that the (unconstrained) min-max multiway cut problem is NP-hard even for three terminals, strengthening the known results.
\end{abstract}

\section{Introduction}\label{sec:introduction}
Cuts in graphs are a fundamental concept in computer science, playing a crucial role in both theory and numerous applications~\cite{Shmoys:97}. They are used for defining important graph parameters such as bisection, expansion, and treewidth, and have stimulated the development of various algorithmic techniques, including linear programming, divide and conquer, and semidefinite relaxations. Graph cuts are employed in diverse areas such as computer graphics, parallel and distributed systems, 
transportation, military and agriculture applications, analysis of social networks, and VLSI design, to name a few.

Generally speaking, in usual cut problems, the goal is to \emph{separate} some (pairs of) vertices, and it is irrelevant what happens to the other pairs - whether they end up in the same or in different connected components. However, in applications such as image segmentation, route planning, power grid management, forest planning and harvest scheduling, and market zoning, it is
important that parts of $G$ obtained by the cut are \emph{connected}~\cite{DGPW:11,HRP:11,CCG:13,VKR:08,GKLSZ:19}.


The Min-Max Connected Multiway Cut problem combines these two orthogonal requirements: 
separating certain graph parts while keeping others connected.
This kind of combination of separation and connectivity constraints
is a natural extension of classical problems and brings interesting
challenges. 
Though each of the two constraints is relatively well-understood 
\emph{individually}, and in some cases, the fulfillment of the
connectivity requirement comes for free as a bonus of the optimality of the cut
objective (see Related Results), the \emph{combination} of the requirements in the general setting
makes the problems challenging. Little theoretical research has addressed both 
requirements simultaneously. 

Our main motivation for studying the min-max connected multiway cut problem
is its close connection to the Spanning Tree Congestion problem (See Sec.~\ref{subsec:stc}), a problem
that has been studied extensively but with very little algorithmic progress \cite{kolman:25}.

\subsection{Terminology}
A family of disjoint subsets 
$S_1,\ldots,S_k\subseteq V$ is a {\em partition} of the set $V$ if $\bigcup_{i=1}^k S_i =V$.
For a graph $G=(V,E)$ and a subset $S\subseteq V$
of vertices, $G[S]$ is the subgraph of $G$ induced by the subset $S$;
by $n=|V|$ we denote the number of vertices and by $m=|E|$ the number of edges of $G$. 
For a partition $S_1,\ldots,S_k$ of the vertex set of a graph $G=(V,E)$,
$E(S_1,\ldots,S_k)=\left \{ \{u,v\}\ | \ u\in S_i, v\in S_j, i\not =j \right \}$
denotes the set of edges between the parts \(S_1,\ldots,S_k\).
For a positive integer $k$, $[k]=\{1,\ldots,k\}$.

Given a graph $G=(V,E)$ and a subset $\Gamma=\{t_1,\ldots,t_k\} $ of $k$ 
vertices, called {\em terminals}, a {\em Connected Multiway Cut} is a partition 
$S_1,\ldots, S_k$ of $V$ such that
\begin{enumerate}
\item for each $i\in [k]$, $t_i\in S_i$, and 
\item for each $i\in [k]$, $G[S_i]$ is connected.
\end{enumerate}

Without the second requirement, we simply have a Multiway Cut~\cite{BFK:14,SvitkinaT:04}. 
The \emph{cost of a Multiway Cut} $C=\{S_1,\ldots,S_k\}$ of a graph $G=(V,E)$ 
with edge weights $w:E\rightarrow \NN^+$ 
is $\cost(C)=\max_{i}\delta(S_i)$ where $\delta(S)=\sum_{e\in E(S,V\setminus S)}w(e)$ denotes
the sum of weights of edges between $S$ and $V\setminus S$. 
We define the \emph{cost of a Connected Multiway Cut} analogously.
The {\em Min-Max Connected Multiway Cut} problem is to find a Connected Multiway Cut
of minimum cost; the minimum cost is denoted $\CMC(G,T)$.
In the unweighted version of the problem, all edges have weight one.

\subsection{Our Results}
The NP-hardness of the min-max connected multiway cut on general graphs with at least four terminals follows using the proof of the NP-hardness of the min-max multiway cut by Svitkina and Tardos \cite{SvitkinaT:04} without any modification. Therefore, we focus our attention on either simpler graphs or fewer number of terminals. In particular, we show the following new results.
\begin{itemize}
    \item We show that the min-max connected multiway cut problem is weakly NP-hard for three terminals on graphs of treewidth three. We also show the same weak NP-hardness for the (unconstrained) min-max multiway cut problem.
    \item We show that the min-max connected multiway cut problem remains weakly NP-hard on graphs of treewidth two if one allows arbitrary number of terminals.
    \item We show that the min-max connected multiway cut problem is $W[1]$-hard when paremeterized by the treewidth, or number of terminals, or both, for the case when the input weights are given in unary.
    \item We give a pseudopolynomial algorithm for the weighted problem on graphs of bounded treewidth. Our algorithm takes time $2^{\Oh{\tau^2}}N^{\Oh{\tau}}$ where the input size (in unary) is $N$ and the graph has treewidth $\tau$. This complements our hardness results, both (weak) NP-hardness and $W[1]$-hardness.
    \item We give an FPTAS for the weighted problem on bounded treewidth graphs.
    \item We show that if the number of terminals is fixed, then the weighted problem is polynomial time solvable on trees. We do this by showing that the problem is FPT on trees when parameterized by the number of terminals.
    \item We give some evidence why the weighted problem might be hard even on trees by showing superpolynomial extension complexity for a natural polytope associated with the problem, thereby ruling out many natural LP formulations of polynomial size, and further, we show that an exact version of the weighted problem is NP-hard on trees.
\end{itemize}

\subsection{Related Results}

One of the most studied cut problems is the minimum $s-t$ cut problem. 
Given a graph $G=(V,E)$
and two distinct vertices $s$ and $t$, find a partition $S,V\setminus S$ of the vertex 
set $V$ such that $s\in S$, $t\in V\setminus S$, and $E(S,V\setminus S)$ is minimized. Given a minimum cut, we observe that both parts $S$ and $V\setminus S$ are connected; otherwise, 
there would be a strictly smaller cut. 
Garg~\cite{Garg:94} provides a polyhedral
description of all $s-t$ cuts in which the $s$-side is connected, all $s-t$ cuts 
in which the $t$-side is connected, and all $s-t$ cuts in which both sides are connected;
we note that the polyhedra are unbounded.

For the global minimum cut problem where the task is to find a non-trivial partition
$S,V\setminus S$ minimizing $E(S,V\setminus S)$, the optimum solution has again both
sides connected.

The situation is different for the maximum cut problem where the optimality of 
a solution does not imply connectivity of either side. This motivated the definitions
of two related problems. In the bond problem, the task is to find a partition
$S,V\setminus S$ maximizing $E(S,V\setminus S)$ with both sides connected~\cite{DEHKKLPSS:21,KT:26}; in the maximum connected cut problem, the objective
is the same but the connectivity constraint is imposed on one side only~\cite{Hajiaghayietal:20,Schieberv:23}.
It should be noted that the connectivity constraints make the maximum cut problem
harder: in contrast to the maximum cut without the connectivity constraints, 
both versions with the connectivity constraints are NP-hard even on planar 
bipartite graphs, and the bond
problem does not admit a constant factor approximation algorithm, unless P = NP \cite{DEHKKLPSS:21}.
Both problems are FPT with respect to the treewidth \cite{DEHKKLPSS:21}.

When cutting a graph into more than two parts, the most common objective minimizes 
the total weight of deleted edges. In contrast, the min-max multiway cut problem,
mentioned earlier, minimizes the maximum weight of deleted edges adjacent to any single part.
The best known approximation algorithm for this problem, due to Bansal et al.~\cite{BFK:14},
achieves approximation ratio $O(\sqrt{\log n \log|\Gamma|)}$,
improving over $O(\log^2 n)$-approximation by Svitkina and Tardos~\cite{SvitkinaT:04}.
Kim at al.~\cite{KimPST:17} proved that the min-max multiway cut problem is FPT when parametrized by both the number of
terminals and the maximum weight of deleted edges adjacent to any single part.

\subsection{Spanning Tree Congestion Problem}\label{subsec:stc}
One of our motivations for dealing with the min-max connected multiway cut is its close 
connection to the \emph{Spanning Tree Congestion} problem which is defined as follows~\cite{Ostrovskii:04}: 
given a graph $G=(V,E)$, construct a spanning tree $T$ of $G$ minimizing its maximum edge 
congestion where the congestion of an edge $e\in T$ is the number of edges $uv\in E$ in $G$
such that the unique path between $u$ and $v$ in $T$ passes through $e$; 
the optimal value for a given graph $G$ is denoted $\STC(G)$.

The problem was introduced under different names in the late 1990s but till today, the 
computational complexity of the problem is not much understood: the problem is known to be
NP-hard even for graphs of maximum degree at most three~\cite{atalig-etal:26}, 
there is $\Oh{\Delta\cdot\log^{3/2}n}$-approximation algorithm for graphs of maximum
degree $\Delta$~\cite{kolman:25} but the best known approximation ratio for general graph is  $n/2$ only~\cite{Otachi:20}; interestingly,
this is the approximation ratio achieved by \emph{any} spanning tree. 

The following two lemmas capture the connection between the two problems.
\begin{lemma}
Let $G=(V,E)$ be a connected graph, $T$ a spanning tree minimizing the spanning tree 
congestion, $r\in V$ any non-leaf vertex of $T$, and $\Gamma=\{t_1,\ldots,t_k\}\subset V$ the set of neighbors of $r$ in $T$. Then
\[
\STC(G) \geqslant\CMC(G\setminus\{r\},\Gamma) + 1\ .
\]
\end{lemma}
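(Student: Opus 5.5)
Removing $r$ from $T$ splits it into $k$ subtrees $T_1,\ldots,T_k$, where $T_i$ contains $t_i$. We take $S_i=V(T_i)$ as the candidate connected multiway cut of $G\setminus\{r\}$ with terminal set $\Gamma$, and show that its cost is at most $\STC(G)-1$. Since $\CMC(G\setminus\{r\},\Gamma)$ is the minimum over all connected multiway cuts, this gives the claim.

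First we check feasibility. The sets $S_1,\ldots,S_k$ are disjoint and together cover $V\setminus\{r\}$, because the components of $T-r$ cover all vertices other than $r$. Each $S_i$ contains exactly one terminal, namely $t_i$: every neighbor of $r$ in $T$ lies in a different component of $T-r$. Finally, $G[S_i]$ contains the tree $T_i$, so it is connected. Hence $\{S_1,\ldots,S_k\}$ is a connected multiway cut of $(G\setminus\{r\},\Gamma)$.

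Next we bound the cost. Fix $i$ and look at the tree edge $e_i=rt_i$. Its congestion is the number of edges $uv\in E$ whose $T$-path uses $e_i$. Removing $e_i$ from $T$ separates $S_i$ from $V\setminus S_i$, so these are exactly the edges of $G$ with one endpoint in $S_i$ and the other in $V\setminus S_i$. The latter set is $\{r\}\cup\bigcup_{j\ne i}S_j$. Two kinds of edges are counted:
\begin{itemize}
\item the edges of $E(S_i,(V\setminus\{r\})\setminus S_i)$, whose number is $\delta(S_i)$ measured in $G\setminus\{r\}$;
\item the edges between $S_i$ and $r$, which include the edge $rt_i$ itself, so there is at least one of them.
\end{itemize}
Therefore the congestion of $e_i$ is at least $\delta_{G\setminus\{r\}}(S_i)+1$. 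Since $T$ is optimal, every edge of $T$ has congestion at most $\STC(G)$. It follows that $\delta_{G\setminus\{r\}}(S_i)\le \STC(G)-1$ for every $i$. Taking the maximum over $i$ and then minimizing over all connected multiway cuts yields $\CMC(G\setminus\{r\},\Gamma)\le\STC(G)-1$.

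No step is genuinely hard. The one point that needs care is to measure $\delta$ in $G\setminus\{r\}$ rather than in $G$: the edges to $r$ are dropped from the cut count, but they are counted in the congestion of $e_i$, and the tree edge $rt_i$ alone already supplies the ``$+1$''. The assumption that $r$ is a non-leaf only guarantees $k\ge 2$, so that the multiway cut instance is non-trivial; it is not needed for the inequality itself. The argument is for the unweighted setting, matching the definition of $\STC$.
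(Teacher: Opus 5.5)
Your proof is correct and follows the paper's argument: you take the components of $T-r$ as the connected multiway cut and bound each boundary $\delta(V_i)$ through the congestion of the tree edge $rt_i$. Your treatment of the edges to $r$ is slightly more careful than the paper's. The paper asserts that the congestion of $rt_i$ is \emph{exactly} $\delta(V_i)+1$, but it is only at least that, with equality when $rt_i$ is the sole $G$-edge between $V_i$ and $r$; ``at least'' is all the argument needs.
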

\begin{proof}
Let $V_1,\ldots,V_k$ be the connected components of $T \setminus \{r\}$; note that
they constitute a connected multiway cut for the instance $(G\setminus\{r\},\Gamma)$. For each $i\in [k]$,
the spanning tree congestion of the edge $\{r,t_i\}$ in $T$ is exactly $\delta(V_i)+1$.
Thus, we obtain the desired bound
$\STC(G) \geqslant \displaystyle\max_{i=1,\ldots,k} \delta(V_i) + 1 \geqslant \CMC(G\setminus\{r\},\Gamma) + 1 \ .
$
\end{proof}

\begin{lemma}
Let $G=(V,E)$ be a connected graph, $r\in V$ a vertex of $G$, $\Gamma=\{t_1,\ldots,t_l\}\subset V$ the set of neighbors of $r$ in $G$, and $V_1, \ldots, V_l$ an optimal solution of the Min-Max 
Connected Multiway Cut instance $(G,\Gamma)$. Then
\[
\displaystyle\STC(G) \leq \CMC(G\setminus\{r\},\Gamma) + \max\left \{1,\max_{i=1,\ldots,l} \STC(G[V_i])\right \} \ .
\]
\end{lemma}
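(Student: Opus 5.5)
We construct an explicit spanning tree $T$ of $G$ and bound the congestion of each of its edges. Here the solution $V_1,\ldots,V_l$ is read as a connected multiway cut of the instance $(G\setminus\{r\},\Gamma)$, i.e.\ a partition of $V\setminus\{r\}$, which is what the right-hand side refers to.

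\textbf{Construction.} For each $i\in[l]$, let $T_i$ be a spanning tree of $G[V_i]$ achieving $\STC(G[V_i])$; it exists because $G[V_i]$ is connected. Let $T$ consist of the edges $\{r,t_i\}$ for $i\in[l]$ together with all trees $T_i$. Since the $V_i$ partition $V\setminus\{r\}$ and each $T_i$ spans $V_i$, $T$ is a spanning tree of $G$. Removing $r$ from $T$ leaves exactly the components $V_1,\ldots,V_l$.

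\textbf{Edges at $r$.} Removing $\{r,t_i\}$ splits $T$ into $V_i$ and its complement. The edges of $G$ crossing this cut are the edges from $V_i$ to other parts $V_j$, and the edges from $V_i$ to $r$. Their number is $\delta_{G\setminus\{r\}}(V_i)+|N(r)\cap V_i|$. Every neighbour of $r$ is a terminal and each $V_i$ contains exactly one terminal, so $|N(r)\cap V_i|=1$. Hence the congestion of $\{r,t_i\}$ is $\delta(V_i)+1\le \CMC(G\setminus\{r\},\Gamma)+1$.

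\textbf{Edges inside a part.} Take $e\in T_i$. Removing $e$ from $T$ splits $V_i$ into $A$ and $B$; say $t_i\in A$, so the side of $T$ containing $r$ is $A\cup\{r\}\cup\bigcup_{j\ne i}V_j$ and the other side is $B$. The edges of $G$ crossing this cut are of three kinds:
\begin{itemize}
\item edges inside $G[V_i]$ between $A$ and $B$, which are at most $\STC(G[V_i])$ because $e$ has exactly this congestion in $T_i$;
\item edges from $B$ to $r$, of which there are none, since $t_i\notin B$ is the only neighbour of $r$ in $V_i$;
\item edges from $B$ to some $V_j$ with $j\ne i$, which are at most $\delta(V_i)\le\CMC(G\setminus\{r\},\Gamma)$.
\end{itemize}
So the congestion of $e$ is at most $\CMC(G\setminus\{r\},\Gamma)+\STC(G[V_i])$.

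Taking the maximum over all edges of $T$ gives
\[
\STC(G)\le \CMC(G\setminus\{r\},\Gamma)+\max\left\{1,\max_i \STC(G[V_i])\right\}.
\]

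The main point requiring care is the middle step for edges inside a part. One must check that no edge from $B$ to $r$ crosses the cut, which uses that every neighbour of $r$ is a terminal and that the unique terminal of $V_i$ lies in $A$. One must also check that each crossing edge of $G$ is counted exactly once across the three kinds.
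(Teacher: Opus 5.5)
Your proof is correct and follows the paper's argument: you build the same spanning tree $T$, namely optimal spanning trees $T_i$ of the parts $G[V_i]$ joined to $r$ by the edges $\{r,t_i\}$, and you obtain the same bounds, $\CMC(G\setminus\{r\},\Gamma)+1$ for the edges at $r$ and $\CMC(G\setminus\{r\},\Gamma)+\STC(G[V_i])$ for the edges of $T_i$. You also spell out details the paper leaves implicit: each $V_i$ has exactly one edge to $r$, the side of an edge of $T_i$ not containing $t_i$ has no edge to $r$, and the instance should be read as $(G\setminus\{r\},\Gamma)$.
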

\begin{proof}
For each $i\in [l]$, let $T_i$ be an optimal spanning tree of subgraph $G[V_i]$; 
since each of these
subgraphs is connected, all these trees exist. Let $T$ be the subgraph of $G$
obtained as the union of the disjoint trees $T_1,\ldots, T_l$ with all the edges
adjacent to $r$. Then, by construction, $T$ is a spanning tree of $G$. For each
$i\in [l]$, the congestion of the edge $\{r,t_i\}$ in $T$ is $1+\CMC(G\setminus\{r\},\Gamma)$,
and the congestion of any edge in $T_i$ is at most $\STC(G[V_i]) + \CMC(G\setminus\{r\},\Gamma)$ which completes the proof.
\end{proof}

\section{Hardness Results}\label{sec:hardness}
\subsection{Min-Max Cut vs. Min-Max Connected Cut}
For at most three terminals the min-max multiway cut and the min-max connected multiway cut
problems are equivalent in the following sense.

\begin{lemma}\label{lem:connected-vs-nonconnected}
Let $G$ be a connected graph and $\Gamma$ a set of at most three terminals. For any multiway 
cut $C$ for the instance $(G,\Gamma)$, there exists a connected multiway cut $C'$ for the
same instance such that $\cost(C')\leqslant\cost(C).$
\end{lemma}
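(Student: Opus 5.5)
The plan is to repair a given multiway cut $C=\{S_1,\ldots,S_k\}$ with $k\leqslant 3$ by local moves. Each move makes the cut \emph{closer} to connected and never increases any individual boundary $\delta(S_i)$, so it never increases $\cost(C)$. As a progress measure I use $\Phi(C)$, the total number of connected components of $G[S_1],\ldots,G[S_k]$. We have $\Phi(C)\geqslant k$, with equality exactly when $C$ is a connected multiway cut. So it suffices to show: if $\Phi(C)>k$, there is a multiway cut $C''$ with $\Phi(C'')<\Phi(C)$ and $\delta(S''_i)\leqslant\delta(S_i)$ for every $i$. Iterating this at most $n$ times gives $C'$. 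The case $k=1$ is trivial, since then $S_1=V$ and $G$ is connected.

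For the step, suppose some part, say $S_1$, has a component $K$ of $G[S_1]$ with $t_1\notin K$. Since $K$ is a component of $G[S_1]$, there are no edges between $K$ and $S_1\setminus K$. Since $G$ is connected and $K\neq V$ (because $t_1\notin K$), the edges leaving $K$ all go to the other parts and have positive total weight. Write $a=w(E(K,S_2))$ and, if $k=3$, $b=w(E(K,S_3))$; if $k=2$, set $b=0$. So $a+b>0$. Without loss of generality $a\geqslant b$, hence $a>0$. Form $C''$ by moving $K$ from $S_1$ to $S_2$. This is still a multiway cut, since $K$ contains no terminal.

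Now I check the boundaries directly.
\begin{itemize}
\item $\delta(S_1)$ decreases by $a+b$, because $K$ had no edges into $S_1\setminus K$.
\item $\delta(S_2)$ changes by $-a+b\leqslant 0$: the $K$–$S_2$ edges become internal, and the $K$–$S_3$ edges become new boundary edges of $S_2$.
\item $\delta(S_3)$ is unchanged, because the $K$–$S_3$ edges remain cut edges incident to $S_3$.
\end{itemize}
Hence every boundary is non-increasing.

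For the progress measure: $S_1$ loses the component $K$ and its other components are unaffected. Because $a>0$, $K$ is adjacent to some component of $G[S_2]$ and merges with it, possibly together with several such components. So the number of components of $G[S_2\cup K]$ is at most that of $G[S_2]$, and $\Phi$ drops by at least one.

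The main point to get right is the boundary bookkeeping, which is exactly where the hypothesis of at most three terminals is used. With a fourth part, the edges from $K$ to $S_3$ and $S_4$ would both be added to $\delta(S_2)$. Choosing the neighbour with the largest connection $a$ would then only guarantee $a\geqslant$ each of the others separately, not $a\geqslant$ their sum, so the boundary of the receiving part could grow. With three parts there is only one third part, so the choice $a\geqslant b$ suffices.
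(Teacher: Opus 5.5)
Your proof is correct and follows essentially the paper's argument: take a terminal-free piece of a disconnected part and move it to whichever remaining part it sends more weight to, so that no boundary increases while the number of components strictly drops. Your version is slightly more careful than the paper's, since you move a single component $K$ and track an explicit potential $\Phi$; this makes the strict decrease (via $a>0$, which uses that edge weights are positive) fully rigorous.
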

\begin{proof}
We prove the claim for three terminals. For two terminals, the proof is analogous.

Let $C$ be a multiway cut and let $V_1, V_2, V_3$ be the corresponding
partition of $V(G)$. If $G[V_i]$ is connected for each $i\in\{1,2,3\}$ the lemma holds with $C'=C$.

So, without loss of generality, we assume that $G[V_1]$ is not connected and let $W_1$ and
$W_2$ be a partition of $V_1$ such that $t_1\in W_1$ and $E(W_1,W_2)=\emptyset$. 
We are going to define a new connected partition $V'_1, V'_2, V'_3$ of $V(G)$.
Let $V'_1=W_1$. If $e(W_2,V_2)\geqslant e(W_2,V_3)$, we define $V'_2=V_2\cup W_2$ and $V'_3=W_3$,
and if $e(W_2,V_2) < e(W_2,V_3)$, we define $V'_2=V_2$ and $V'_3=V_3\cup W_2$. In both cases,
we obtain a feasible solution of the min-max multiway cut that has fewer connected components and whose cost is not higher than the cost of $(V_1,V_2,V_3)$. Applying this procedure repeatedly, we end up with a partition where each induced subgraph is connected and whose cost is at most $\cost(C)$.
\end{proof}

The above lemma implies that an optimal solution for the min-max multiway cut problem can always be assumed to be a connected multiway cut when we have at most three terminals. For two terminals, the problem reduces to the standard $s-t$ cut and therefore can be solved in polynomial time \cite{AhujaMO:93}.

For $k=4$, there is a difference between the optimal solutions of the min-max 
multiway cut and the min-max connected multiway cut. Consider the example graph on five vertices with edge weights one and two as shown in Figure~\ref{fig:minmax-four-term}. In any connected multiway cut, the vertex $u$ cannot belong to the component of $t_4$, and if it belongs to a component of any other terminal, the boundary size of this component is seven. However, the cost of the multiway cut $\{t_1\},\{t_2\},\{t_3\},\{u,t_4\}$ is only six.
\begin{figure}[htbp]
  \centering
  \includegraphics[width=0.4\textwidth]{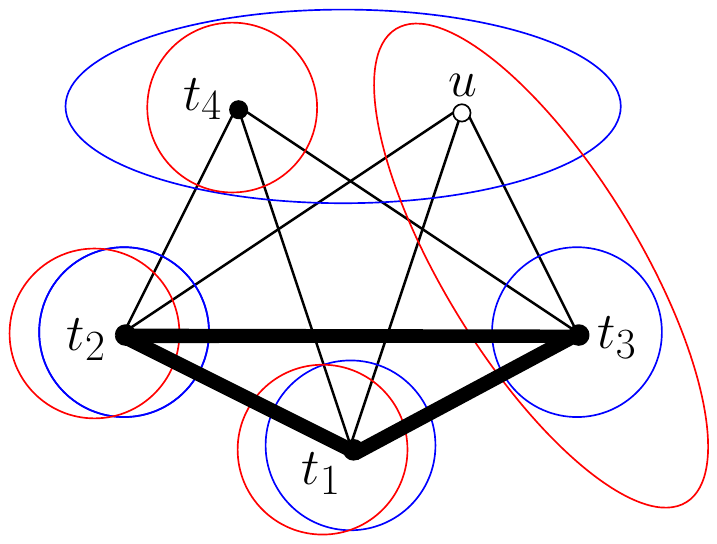}
  \caption{Min-max multiway cut vs. Min-max connected multiway cut. The fat edges are of weight two, the regular edges are of weight one. The optimal min-max cut is of cost six (the blue ovals) while the optimal min-max
  connected cut is of size seven (the red ovals).}
  \label{fig:minmax-four-term}
\end{figure}

\subsection{NP-hardness}
The weighted multiway cut problem was shown to be NP-hard for four terminals by Svitkina and Tardos \cite{SvitkinaT:04}. The same proof without any modification also proves NP-hardness of the weighted min-max connected multiway cut problem because in their reduction, Svitkina and Tardos use (four) terminals that are connected to every non-terminal. So, any multiway cut in the resulting graph is also a connected multiway cut.

For two terminals, the min-max multiway cut problem reduces to the standard $s-t$ cut problem and can be solved in polynomial time. Furthermore, from a minimum cut, a minimum connected cut can be obtained as shown in Lemma \ref{lem:connected-vs-nonconnected}.

This leaves the situation for three terminals unclear. We now prove that the weighted min-max connected multiway cut problem is weakly NP-hard for three terminals. Together with Lemma \ref{lem:connected-vs-nonconnected} this also shows weak NP-hardness of min-max multiway cut problem with three terminals. 
A problem is weakly NP-hard if it is NP-hard when the input weights are written in binary but polynomial time solvable when the input is written in unary. 


\begin{theorem}\label{thm:NPhard-general}
The weighted min-max connected multiway cut problem and the weighted min-max multiway cut problem with three terminals are weakly NP-hard.
\end{theorem}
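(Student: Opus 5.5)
We reduce from \textsc{Partition}: given positive integers $a_1,\ldots,a_n$ with $\sum_i a_i = 2A$, decide whether some subset sums to exactly $A$. Since \textsc{Partition} is only weakly NP-hard, the hardness we obtain is weak as well. By Lemma~\ref{lem:connected-vs-nonconnected}, with three terminals the optimal values of the min-max multiway cut and of the min-max connected multiway cut coincide. It therefore suffices to build one instance with three terminals, compute its optimum, and show the optimum is at most a threshold $B$ if and only if the \textsc{Partition} instance is a yes-instance. Then both statements of Theorem~\ref{thm:NPhard-general} follow at once.

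\textbf{Construction.} Take terminals $t_1,t_2,t_3$. For each item $i$, add one vertex $v_i$ joined to $t_1$ and to $t_2$ by edges of weight $a_i$ each, and to $t_3$ by an edge of weight $M a_i$ for a large $M$. We want $t_3$ to be ``expensive to take items'' and $t_1,t_2$ to be balanced, so we use a different scheme. Edge $v_it_3$ gets weight $L$, a huge constant independent of $i$, and $v_it_1$, $v_it_2$ get weight $a_i$. Add a direct edge $t_1t_2$ of weight $D$ and edges $t_1t_3$, $t_2t_3$ to tune the constants.

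\textbf{Cost of a partition.} Suppose $S\subseteq[n]$ of the items go with $t_1$, $T$ with $t_2$, and $R$ with $t_3$. The part of $t_3$ costs at least $L(n-|R|)$ plus the $a$-terms of $R$. Choosing $L$ large and the threshold so that $t_3$'s part cannot absorb everything is delicate. A cleaner option is to make $v_i$ adjacent only to $t_1,t_2$ with weights $a_i$, and to give $t_3$ heavy edges of weight $H$ to both $t_1$ and $t_2$. Then $t_3$ is isolated with boundary $2H$, and the parts of $t_1$ and $t_2$ have boundaries $H+\sum_{i\in T}a_i+\sum_{i\in S}a_i = H+2A$. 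These are symmetric, so balance is not detected. The cut edges must instead differ between the two sides. So I would set $w(v_it_1)=a_i$ and $w(v_it_3)=w(v_it_2)$ with suitable values, so that item $i$ placed with $t_1$ contributes $c\,a_i$ to $t_1$'s boundary, while placed with $t_2$ it contributes $a_i$ to $t_1$'s boundary and something else to $t_2$'s. The target is to make $\delta(S_1)$ increase with $\sum_{S}a_i$, make $\delta(S_2)$ decrease with it, and keep $\delta(S_3)$ constant. Then $\max\le B$ forces $\sum_S a_i=A$.

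\textbf{Concrete gadget (the step I expect to be the main obstacle).} Let $v_i$ have edges $v_it_1$ of weight $a_i$, $v_it_2$ of weight $a_i$, and $v_it_3$ of weight $2a_i$.
\begin{itemize}
\item If $v_i\in S_1$, it adds $3a_i$ to $\delta(S_1)$, $a_i$ to $\delta(S_2)$ and $2a_i$ to $\delta(S_3)$.
\item If $v_i\in S_2$, it adds $a_i$ to $\delta(S_1)$, $3a_i$ to $\delta(S_2)$ and $2a_i$ to $\delta(S_3)$.
\item If $v_i\in S_3$, it adds $a_i$ to $\delta(S_1)$, $a_i$ to $\delta(S_2)$ and $2a_i$ to $\delta(S_3)$, and nothing more.
\end{itemize}
Thus $t_3$ always has boundary $4A$, and putting items with $t_3$ is always best. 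To penalize that, increase the weights to $t_1$ and $t_2$ relative to the weight to $t_3$, so that $t_3$ would pay heavily. With weights $x a_i$ to $t_1$ and $t_2$ and $y a_i$ to $t_3$, write $\sigma_j=\sum_{i\in S_j}a_i$. Then
\[
\delta(S_1)=2xA+y\sigma_1+x\sigma_1-x\sigma_1+\ldots
\]
Simplifying, $\delta(S_1)=x\cdot 2A-x\sigma_1+(x+y)\sigma_1-\ldots$; the algebra is routine. The choice should give $\delta(S_3)=2xA\cdot[\text{terms}]+\ldots$, growing with $\sigma_3$, and $\delta(S_1)$, $\delta(S_2)$ growing with $\sigma_1$ and $\sigma_2$ respectively. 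I would solve a small linear system for $x$, $y$ and the terminal-terminal weights so that the common threshold $B$ is met only when $\sigma_3=0$ and $\sigma_1=\sigma_2=A$. Checking this solution exists with all weights positive is the crux. The converse direction, from a balanced partition to cost $B$, is immediate. Connectivity holds because every $v_i$ is adjacent to its own terminal, as the earlier lemma also guarantees.
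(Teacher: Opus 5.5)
There is a genuine gap: the proposal never produces a working reduction. Each concrete gadget you write down either fails or is left open.
\begin{itemize}
\item With weights $a_i,a_i,2a_i$, you note yourself that $t_3$ absorbs items for free.
\item The variant with $v_i$ adjacent only to $t_1,t_2$ is blind to balance.
\item The final parametrized version ends in ``$\ldots$'', with an unspecified threshold $B$ and a ``small linear system'' whose solvability you flag as the crux.
\end{itemize}
For an NP-hardness proof of this kind, the choice of weights together with an exact computation of the three boundary values \emph{is} the proof. What you have is a plan, not an argument. The surrounding framework is fine: the transfer to the unconstrained problem via Lemma~\ref{lem:connected-vs-nonconnected}, and connectivity of the star-shaped parts.

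The algebra you called routine is short. Take $w(v_it_1)=w(v_it_2)=xa_i$, $w(v_it_3)=ya_i$, terminal--terminal weights $w_{12},w_{13},w_{23}$, and $\sigma_j=\sum_{v_i\in S_j}a_i$. Then
\begin{itemize}
\item $\delta(S_1)=2xA+y\sigma_1+w_{12}+w_{13}$,
\item $\delta(S_2)=2xA+y\sigma_2+w_{12}+w_{23}$,
\item $\delta(S_3)=2yA+(2x-y)\sigma_3+w_{13}+w_{23}$.
\end{itemize}
The missing idea is to give $t_3$ a fixed offset, so that it already sits at the threshold before taking any item. Choosing $x=y=1$, $w_{12}=0$, $w_{13}=w_{23}=A$ gives $\delta(S_1)=3A+\sigma_1$, $\delta(S_2)=3A+\sigma_2$ and $\delta(S_3)=4A+\sigma_3$. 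Hence the cost is at most $4A$ if and only if $\sigma_3=0$ and $\sigma_1=\sigma_2=A$.

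This fix amounts to adding a dummy item of size $A$, i.e., reducing \textsc{Partition} to 3-way number partitioning. That is exactly where the paper starts. It reduces from 3-way partitioning of numbers summing to $3N$, using $K_{3,n}$ with every edge $v_it_j$ of weight $a_i/2$. This gives $\delta(S_j)=3N/2+\sigma_j/2$, so the optimum is at least $2N$, with equality iff every $\sigma_j=N$. Because the source problem is already a balanced three-way problem, the fully symmetric gadget suffices. The asymmetry between terminals that you were trying to engineer is never needed.
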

\begin{proof}
    We show NP-hardness for the weighted min-max connected multiway problem. The NP-hardness of the weighted min-max multiway problem follows by Lemma \ref{lem:connected-vs-nonconnected}.

We give a reduction from the three-way partitioning problem, which is a special case of the multiway number partitioning problem~\cite{Korf:09} (cf. multiprocessor scheduling problem~\cite{GareyJ:79}).
In the multiway partitioning problem, we wish to separate $n$ positive integers with total sum $kN$, into $k$ parts in such a way that each part has sum $N$. This problem is weakly NP-hard for any fixed $k$.
We use $k=3$ in our reduction. 

Given $n$ positive integers $a_1,\ldots,a_n$ summing to $3N$, 
we construct a graph $G$ on $n+3$ vertices as follows: 
for each $i \in [n]$, there is a vertex $v_i$, and in addition, there are three terminals $t_1, t_2, t_3$. Each of the $n$ vertices $v_i$ is connected to each of 
the terminals by an edge of weight $a_i/2$. In any multiway cut, let 
$S_j\subseteq \{v_1,\ldots,v_n\}$ be the set of non-terminals in the component of $t_j$.
Then, the boundary $\delta(S_j)$ has total weight $\sum_{v_i\in S_j}2\cdot a_i/2 + \sum_{v_i\notin S_j}a_i/2=3N/2+\sum_{v_i\in S_j}a_i/2$. 
Thus, as $\sum_{j=1}^3\sum_{v_i\in S_j} a_i=3N$, the optimum min-max connected multiway cut 
in $G$ with terminals $t_1, t_2, t_3$ has size at least $2N$, and it equals $2N$ if and only if there is a three-way partition of the input numbers with each part having sum exactly~$N$.   
\end{proof}
The graphs used in the proof of Theorem \ref{thm:NPhard-general} are the complete
bipartite graphs $K_{3,n}$, and these graphs have treewidth three. This fact, together with Lemma \ref{lem:connected-vs-nonconnected}, gives us the 
following corollary.

\begin{corollary}
The weighted min-max connected multiway cut and the weighted min-max  
multiway cut with three terminals are weakly NP-hard on graphs of treewidth at least three.
\end{corollary}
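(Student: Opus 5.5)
The corollary follows by combining the reduction in the proof of Theorem~\ref{thm:NPhard-general} with a treewidth bound for the graphs it produces. The reduction maps an instance $a_1,\ldots,a_n$ of three-way partitioning to the weighted graph $G$ with vertex set $\{v_1,\ldots,v_n\}\cup\{t_1,t_2,t_3\}$. This graph is exactly $K_{3,n}$: the $v_i$ form one side, the terminals form the other, and the edge weights $a_i/2$ do not change the underlying graph. It therefore suffices to show that $K_{3,n}$ has treewidth at most three. The reduction is polynomial when the numbers are encoded in binary, so weak NP-hardness then holds already on this class. (If fractional weights are a concern, doubling all weights makes them integral and changes nothing.)

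For the treewidth bound I will write down an explicit tree decomposition. Take a path (in fact a star would also work) of $n$ bags $B_i=\{t_1,t_2,t_3,v_i\}$ for $i\in[n]$. Each edge $v_it_j$ lies in bag $B_i$. Each $v_i$ occurs in exactly one bag. Each terminal occurs in every bag, so the bags containing it form a connected subtree. Every bag has size four, so the width is three. For $n\ge 3$ the treewidth is in fact exactly three, since $K_{3,3}$ is a minor and has treewidth three. The upper bound is all the hardness statement needs, so "treewidth at least three" in the corollary should be read as "on graphs of treewidth three", meaning the class of graphs of treewidth at most three.

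For the min-max connected multiway cut, hardness is given directly by Theorem~\ref{thm:NPhard-general} on these instances. For the unconstrained min-max multiway cut I will invoke Lemma~\ref{lem:connected-vs-nonconnected}. Since $G$ is connected and has three terminals, any multiway cut can be converted into a connected one of no greater cost, so the two optima coincide on every instance. Deciding whether the optimum is at most $2N$ is therefore NP-hard for both problems on the same instances. No step is a real obstacle; the only point needing care is the interpretation of "at least three" as a class containing all graphs of treewidth three, which the decomposition above handles.
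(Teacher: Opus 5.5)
Your argument is the paper's: the reduction of Theorem~\ref{thm:NPhard-general} produces $K_{3,n}$, which has treewidth three, and Lemma~\ref{lem:connected-vs-nonconnected} transfers the hardness to the unconstrained min-max multiway cut. Your explicit decomposition with bags $\{t_1,t_2,t_3,v_i\}$, the $K_{3,3}$ lower bound and the weight-doubling remark fill in details the paper leaves implicit; note only that neither your argument nor the paper's at this point gives the pseudopolynomial half of ``weakly'' in the paper's own definition, which comes separately from Corollary~\ref{cor:XPalg} (with Lemma~\ref{lem:connected-vs-nonconnected} for the unconstrained version).
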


If we allow more terminals, then the NP-hardness holds for simpler graphs, namely for
graphs of treewidth two. 

\begin{theorem}\label{thm:np-hard-bipartite}
The weighted min-max connected multiway cut on bipartite planar graphs of treewidth two is weakly NP-hard.
\end{theorem}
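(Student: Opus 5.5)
The plan is to reduce from \textsc{Partition}, which is weakly NP-hard: given positive integers $a_1,\ldots,a_n$ with $\sum_i a_i=2N$, decide whether some subset sums to exactly $N$. This mirrors the proof of Theorem~\ref{thm:NPhard-general}, but uses two ``hub'' terminals instead of three. That alone is not enough: with only two terminals the problem is an $s$--$t$ cut and hence polynomial. So I would add one extra terminal per item, with a small gadget attached to it. The extra terminals create the asymmetry that the hubs alone cannot.

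\textbf{Construction.} Take terminals $t_1,t_2$. For every $i$, add a non-terminal $v_i$ adjacent to both $t_1$ and $t_2$; this part is $K_{2,n}$. Then attach a private terminal $u_i$ to $v_i$, giving the item gadget. The graph is planar, since it is a union of $K_{2,n}$ and pendant structures. It is bipartite, with $\{t_1,t_2\}\cup\{u_i\}$ on one side and $\{v_i\}$ on the other. Its treewidth is two: it is series--parallel, and one can take the bags $\{t_1,t_2,v_i\}$ with $\{v_i,u_i\}$ hanging off them.

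\textbf{Accounting.} In a connected multiway cut, each $v_i$ joins exactly one of $t_1$, $t_2$ or $u_i$; connectivity holds automatically because $v_i$ is adjacent to all three. I then write $\delta(S_{t_1})$, $\delta(S_{t_2})$ and $\delta(S_{u_i})$ as affine functions of the choice. The weights $w(v_it_1)$, $w(v_it_2)$ and $w(v_iu_i)$ are to be chosen (as functions of $a_i$ and large constants) so that three things hold. First, choosing $u_i$ pushes $\delta(S_{u_i})$ above the threshold $B$. Second, choosing $t_1$ adds exactly $a_i$ (plus a fixed constant) to $\delta(S_{t_1})$ but only the fixed constant to $\delta(S_{t_2})$, and symmetrically for $t_2$. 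Third, $\delta(S_{u_i})\le B$ whenever $v_i$ does not join $u_i$. With such weights, $\CMC\le B$ holds if and only if the items split into two groups of sum exactly $N$. Both directions are then immediate, and all numbers have polynomially many bits, which gives weak NP-hardness.

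\textbf{Main obstacle.} The hard step is calibrating the weights. A naive choice fails. If $w(v_it_1)=w(v_it_2)=a_i$ and there is no pendant, the hub boundaries are constant. If the pendant weight is made large, it is charged to a hub every time $v_i$ joins that hub. In that case the threshold $B$ must grow linearly in $n$, while $\delta(S_{u_i})$ stays bounded, so joining $u_i$ is never forbidden. My first attempt would therefore replace the single pendant edge by a small series--parallel gadget. This would be a path $v_i - x_i - u_i$ together with edges $x_it_1$ and $x_it_2$ of huge weight $L$. Then giving $v_i$ to $u_i$ forces $x_i$ into $u_i$'s part too, so $\delta(S_{u_i})\ge 2L>B$; keeping $x_i$ with a hub costs both hubs a symmetric amount in $L$. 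The gadget keeps bipartiteness (after subdividing if necessary), planarity and treewidth two. I would then verify that only the $a_i$-weighted edges create an asymmetric contribution to $\delta(S_{t_1})-\delta(S_{t_2})$, so the hub boundaries become $C+\sum_{i\in S_1}a_i$ and $C+\sum_{i\in S_2}a_i$ for a common constant $C$. Setting $B=C+N$ then finishes the reduction.
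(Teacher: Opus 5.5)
\textbf{There is a genuine gap.} It lies in the weight calibration. The gadget you propose runs into exactly the obstacle you diagnosed for the heavy pendant. In every connected multiway cut at least one of the edges $x_it_1,x_it_2$ is cut, and such an edge lies on the boundary of \emph{both} hubs. If $x_i$ joins $t_1$, the edge $x_it_2$ is counted in $\delta(S_{t_1})$ and in $\delta(S_{t_2})$; if $x_i$ joins $u_i$, both edges are cut, one charged to each hub. Hence every hub boundary is at least $nL$ in every solution, so your threshold satisfies $B=C+N\geq nL$. The inequality $2L>B$, which is supposed to forbid opting out, therefore fails for every $n\geq 2$.

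The gadget also breaks the structural requirements. Besides the triangle $t_1v_ix_i$, the vertices $t_1,t_2,v_i,x_i$ together with the path $t_1v_jt_2$ of any other item form a $K_4$ minor. So the treewidth is three, and subdividing edges cannot remove a minor.

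\textbf{The missing idea.} You need weight on the private side that is never charged to a hub, namely edges between two terminals. Such edges are cut in every solution and lie only on private boundaries. The paper gives $v_it_1,v_it_2$ weight $0$; they merely make joining a hub possible under the connectivity constraint. At $v_i$ it attaches a private tree of six terminals. The edges of this tree at $v_i$ sum to $a_i$, and its terminal--terminal edges are heavy enough that a private terminal absorbing $v_i$ exceeds $B$. A hub's boundary is then exactly the sum of the $a_i$ it absorbs, and the graph is $K_{2,n}$ with pendant trees.

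The same idea already repairs your original skeleton. Keep $w(v_it_1)=w(v_it_2)=w(v_iu_i)=a_i$, and attach to $u_i$ a new terminal $y_i$ by an edge of weight $3N-a_i$. Then $\delta(\{u_i,v_i\})=3N+a_i$, so no $v_i$ may join $u_i$. If no $v_i$ joins $u_i$, then $\delta(S_{t_j})=2N+\sum_{i\in S_j}a_i$, $\delta(\{u_i\})=3N$ and $\delta(\{y_i\})=3N-a_i$. Hence the optimum is at most $3N$ if and only if the items split into two halves of sum $N$. The graph is $K_{2,n}$ with pendant paths $v_iu_iy_i$, so it is bipartite, planar and of treewidth two. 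Your claim that $\delta(S_{u_i})$ ``stays bounded'' is exactly where this option was overlooked.
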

\begin{proof}
    We use a straightforward modification of the NP-hardness proof for the min-max multiway cut on trees by Svitkina and Tardos \cite{SvitkinaT:04}. We give a reduction from the Partition problem which is weakly NP-hard \cite{Karp:72}.

    Given $n$ positive integers $a_1,\ldots,a_n$ with total sum $2B$, we want to check whether they can be partitioned into two subsets, each summing to $B$. 
    For each $i\in[n]$, we create a tree $\tau_i$ with a non-terminal vertex $v_i$ and six terminals, with edge weights as shown in Figure~\ref{fig:np-hardness-k=2}. Finally, we add two terminals $t_1, t_2$, and connect them with each non-terminal $v_i$ with an edge of weight zero.

   \begin{figure}[htbp]
      \centering
      \includegraphics[width=0.35\textwidth]{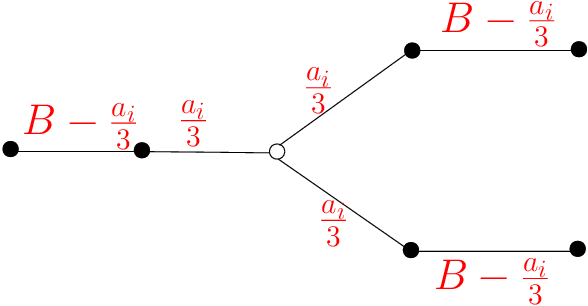}
      \caption{Gadget for NP-hardness for treewidth two}
      \label{fig:np-hardness-k=2}
    \end{figure}

Note that $K_{2,n}$ has treewidth two, and attaching disjoint trees to
it does not increase it. Thus, the resulting graph has treewidth two.
    By construction, it is clearly 
    a bipartite planar graph.

    In any connected multiway cut $C$, if a non-terminal $v_i$ is assigned to any terminal other than $t_1,t_2$, then $\cost(C) > B$. If every $v_i$ is assigned to $t_1$ or $t_2$, then 
    \[
        \cost(C)=\max\left\{\sum_{v_i \text{ assigned to } t_1}a_i, \sum_{v_i \text{ assigned to } t_2}a_i\right\} \ .\] 
    Thus, an optimal min-max connected multiway cut has cost $B$ if and only if there is a desired partition of the given numbers.
\end{proof}

\subsection{Parametrized Complexity}
 In the previous subsection we saw that the min-max connected multiway cut problem is weakly NP-hard
 on graphs of bounded treewidth. Complementing this, in Section \ref{sec:algorithms} we will show that the problem can be solved in pseudopolynomial time, that is, in time $2^{\Oh{\tau^2}}N^{\Oh{\tau}}$ where $\tau$ is the treewidth of the graph and $N$ is the input size assuming the weights are given in unary.

Now we show that this dependence on the treewidth in the exponent is perhaps unavoidable by showing that the min-max connected multiway cut problem is $W[1]$-hard when parameterized by the treewidth of the graph.

\begin{theorem}\label{thm:w1-hardness}
    The min-max connected multiway cut problem is $W[1]$-hard when parameterized by the treewidth of the input graph even if the input weights are given in unary.
\end{theorem}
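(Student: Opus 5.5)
We prove the theorem by a parameterized reduction from \textsc{Unary Bin Packing} parameterized by the number of bins $k$. Jansen, Kratsch, Marx and Schlotter showed this problem is $W[1]$-hard. An instance consists of item sizes $a_1,\ldots,a_n$ written in unary, and $k$ bins each of capacity $B$. We may assume $\sum_i a_i = kB$, by padding with unit items if needed, so the question becomes whether the items can be split into $k$ groups each summing to exactly $B$. This extends the construction of Theorem~\ref{thm:NPhard-general} from three parts to $k$ parts.

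The graph is as follows. There are $k$ terminals $t_1,\ldots,t_k$, one per bin, and one non-terminal vertex $v_i$ per item. Each $v_i$ is joined to every terminal by an edge of weight $a_i$; since the $a_i$ are in unary, all weights stay polynomial. The resulting graph is $K_{k,n}$ with the terminals on the small side, so its treewidth is at most $k$: the bags are $\{t_1,\ldots,t_k,v_i\}$ arranged along a path. Thus both the parameter (treewidth) and the number of terminals are bounded by a function of $k$, which gives hardness for each parameter separately and for both together.

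The cost computation follows the proof of Theorem~\ref{thm:NPhard-general}. Let $S_j$ be the set of items assigned to $t_j$, and write $A_j=\sum_{v_i\in S_j}a_i$. The boundary of part $j$ consists of two pieces. The item vertices in $S_j$ send their $k-1$ edges to the other terminals, contributing $(k-1)A_j$. Every item outside $S_j$ sends its edge to $t_j$, contributing $kB-A_j$. Hence
\[ \delta(\{t_j\}\cup S_j) = kB+(k-2)A_j . \]
Connectivity is automatic, because every $v_i$ is adjacent to every terminal. Since $\max_j A_j\ge B$, the optimum is at least $kB+(k-2)B$, with equality if and only if every $A_j=B$, that is, exactly when a valid packing exists. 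For $k\ge 3$ this equivalence is sound. For small $k$ one can instead use $k+1$ terminals by adding a dummy bin, or simply assume $k\ge3$ without loss of generality.

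The main obstacle is the source hardness result, specifically making sure the cited problem is $W[1]$-hard in exactly this form. We need unary sizes, parameter equal to the number of bins, and the tight variant where all bins must be exactly full. Jansen et al.\ prove hardness for the capacity version, and the padding step above converts it to the exact version. If that reduction step needs care, a fallback is to reduce from \textsc{Multicolored Clique}. In that approach, the numbers encode vertex and edge choices in base $n^{O(1)}$, which keeps them polynomial in unary, with $O(k^2)$ terminals; the same $K_{t,n}$ gadget then yields treewidth $O(k^2)$.
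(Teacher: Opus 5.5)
Your proposal is correct and follows essentially the paper's proof: a reduction from exact unary bin packing ($k$-way number partitioning) to $K_{k,n}$ with the $k$ terminals on the small side, the boundary formula $kB+(k-2)A_j$, and the treewidth bound $k$. Your weight $a_i$ on each edge is the paper's weight $a_i/(k-1)$ scaled by $k-1$ (so the threshold becomes $(2k-2)B$ instead of $2N$), which has the small advantage of keeping all weights integral.
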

\begin{proof}
    We generalize the reduction used in the proof of Theorem \ref{thm:NPhard-general} to number of parts $k$. The $k$-way number partitioning problem is known to be $W[1]$-hard when parameterized by $k$ under the name Exact Unary Bin Packing problem \cite{BlazejJRS25, JansenKMS13}.

    Given positive integers $a_1,\ldots,a_n$ summing to $kN$ we wish to find whether there is a partition $S_1,\ldots,S_k$ of $[n]$ such that $\sum_{j\in S_i}=N$ for each $i\in [k]$. We take the complete bipartite graph $K_{n,k}$ with $n$ non-terminals $\{v_1,\ldots,v_n\}$ and $k$ terminals $\{t_1,\ldots,t_k\}$. For a non-terminal $v_i$ and a terminal $t_j$ we assign the weight $a_i/(k-1)$.

    In any connected multiway cut, let $S_j$ be the set of non-terminals assigned to terminal $t_j$. Then, the boundary $\delta(S_j)$ has total weight $\sum_{v_i\in S_j}(k-1)\cdot a_i/(k-1)+\sum_{v_i\notin S_j}a_i/(k-1)=kN/(k-1)+(k-2)/(k-1)\sum_{v_i\in S_j}a_i.$ Thus the optimum min-max connected multiway cut in our graph has size at least $2N$ and it is equal to $2N$ if and only if there is a $k$-way partition of the input numbers with each part summing exactly to $N$.

    The treewidth of our graph is $k$ assuming $k\leqslant n.$ So min-max connected multiway cut is $W[1]$-hard when parameterized by the treewidth.
\end{proof}

In the above construction one can replace an edge with (positive) integer weight $w$ by $w$ parallel paths each of length two and have an unweighted graph of treewidth $k$. 

\begin{theorem}
    The unweighted min-max connected multiway cut problem is $W[1]$-hard when parameterized by the treewidth of the graph.    
\end{theorem}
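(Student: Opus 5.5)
The plan is to give a parameterized reduction from the weighted instances built in the proof of Theorem~\ref{thm:w1-hardness}, following the remark just before the statement. The first step is to make the weights integral. In that proof every edge $v_it_j$ has weight $a_i/(k-1)$. Multiplying all the $a_i$ and $N$ by $k-1$ turns each weight into the positive integer $a_i$ (now with threshold $2N(k-1)$) and preserves yes/no answers. Since the instance is in unary, the total weight $W=\sum_i k\,a_i$ is polynomial in the input size. Note that $k$ is already the parameter of Exact Unary Bin Packing.

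Next, build the unweighted graph $G'$. Replace each edge $v_it_j$ of weight $w$ by $w$ internally disjoint paths $v_i - x - t_j$ of length two, each through a fresh degree-two vertex $x$. The terminals stay $t_1,\dots,t_k$. The size of $G'$ is $O(n+W)$, which is polynomial. For the treewidth, take a tree decomposition of $K_{n,k}$ of width $k$: a path of bags $\{t_1,\dots,t_k,v_i\}$. For each subdivision vertex $x$ on a path between $v_i$ and $t_j$, hang the bag $\{v_i,t_j,x\}$ off the bag containing $v_i$. This gives treewidth at most $k$, so the parameter is preserved.

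The main step is to show $\CMC(G')=\CMC(G)$, where $G$ is the weighted $K_{n,k}$.

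\emph{($\leqslant$).} Take a connected multiway cut of $G$. For each subdivision vertex $x$ adjacent to $v_i$, put $x$ in the part of $v_i$. Each part stays connected, since $x$ hangs off $v_i$ and each terminal part gains $x$ only through its own $v_i$. A path $v_i-x-t_j$ with $v_i$ and $t_j$ in different parts then contributes exactly one cut edge, namely $xt_j$. So every boundary equals the weighted boundary in $G$.

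\emph{($\geqslant$), the main obstacle.} A subdivision vertex $x$ may lie in a third part, or form a singleton. It can never be alone, because every part contains a terminal and $x$ is not one. Its part must be connected and contain a terminal, so $x$ lies with $v_i$ or with $t_j$. Hence on the path $v_i-x-t_j$ we have $x\in\{\text{part}(v_i),\text{part}(t_j)\}$. If $v_i$ and $t_j$ are in the same part, the path contributes nothing. If they are in different parts, it contributes exactly one cut edge, on the boundaries of those two parts only.

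So a connected multiway cut of $G'$ projects, after deleting the $x$'s, to a partition of $G$ with identical boundary weights. That projection is connected in $K_{n,k}$: each non-terminal part-member $v_i$ is joined to its terminal through its own path, and removing the $x$'s keeps $v_i$ adjacent to $t_j$ in $G$. Hence the two optima coincide, and $W[1]$-hardness transfers from Theorem~\ref{thm:w1-hardness}.
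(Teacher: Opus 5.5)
Your proposal is correct and takes essentially the paper's route: the paper simply observes that each integer-weight edge of the $K_{n,k}$ instance from Theorem~\ref{thm:w1-hardness} can be replaced by that many parallel paths of length two, and that this keeps the treewidth at $k$. You fill in the details the paper leaves implicit: rescaling the weights $a_i/(k-1)$ to integers, an explicit width-$k$ tree decomposition, and the observation that each subdivision vertex must lie in the part of one of its two neighbours, so boundary sizes, and hence the optimum, are preserved exactly.
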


Further note, that in our construction, the number of terminals is $k$, so we get similar
hardness result when parameterized by the number of terminals, or by both the number of terminals and the treewidth together. Also note, that the pathwidth of the graph in the hardness proof is $k$; thus, the $W[1]$-hardess results apply to the parameter pathwidth of the graph as well.

\subsection{Extension Complexity}
Even though we do not know whether the weighted min-max connected multiway cut problem is NP-hard on trees, now we give some hardness results that are closely related. 
More precisely, we now show that a natural polytope associated with the problem has superpolynomial extension complexity.

Let $P\subset \mathbb{R}^d$ be a polytope. A polytope $Q\subset\mathbb{R}^{d+r}$ is called an extended formulation if $P$ is a projection of $Q$. That is, $$P=\left\{x\in\mathbb{R}^d~\left|~\exists y\in\mathbb{R}^d, \binom{x}{y}\in Q\right.\right\}.$$

The extension complexity of a polytope $P$ -- denoted by $\mathrm{xc}(P)$ -- is the smallest number of inequalities needed to describe any extended formulation of $P.$ A superpolynomial lower bound on the extension complexity of a polytope implies that one cannot obtain a polynomial size linear program to optimize a linear function over $P$ using auxiliary variables \cite{ConfortiCZ13, FioriniMPTW15}. This rules out a wide range of algorithms even though it does not completely rule out polynomial algorithms \cite{Rothvoss17}.

Let $G=(V,E)$ be a graph, $w:E\to\mathbb{R}$ be a weight function on the edges, and $\Gamma\subseteq V$
be a set of terminals. We denote by $\mathcal{C}$ the set of all connected multiway cuts in $G$ for $T$, and for a connected cut $C\in \mathcal{C}$ we denote its cost, that is, the maximum over the sizes of the boundaries of its connected components, by $\nu(C),$ and by $\chi^C\in\{0,1\}^{|E|}$ we denote the characteristic vector of $C$ . We define the Min-Max Connected cut polytope $P(G,\Gamma,w)$ to be the convex hull of $\{(\chi^C,\nu(C)\in\mathbb{R}^{|E|+1}~|~C\in \CC\}.$ In other words, $P(G,\Gamma,w)$ is the convex hull of the characteristic vectors of all connected cuts in $G$ extended with a coordinate specifying its value\footnote{This extra-coordinate as well as definition of $P$ to include the weight function, seems unavoidable as the min-max objective function is not linear.}. 

We will prove a superpolynomial lower bound on the extension complexity of $P(G,\Gamma,w)$. To do this, we will use a superpolynomial lower bound on the extension complexity of the convex hull of feasible solutions of the partition problem.

Let $S=\{a_1,\ldots,a_n\}$ be a set of positive integers, and let $B\in\mathbb{N}$ such that $\sum_{i\in[n]}a_i=2B.$ The partition polytope $\mathrm{PARTITION(S,B)}$ is the convex hull of the characteristic vectors of subsets $I\subset [n]$ such that $\sum_{i\in I}a_i = B.$

\begin{theorem}\label{thm:xc_lb_partition}
For infinitely many integers $n\in\mathbb{N}$ there exists a set $S=\{a_1,\ldots,a_n\}$ of positive integers, $B\in\mathbb{N}$ with $\sum_{i\in[n]}a_i=2B$ such that the extension complexity of $\mathrm{PARTITION(S,B)}$ is superpolynomial in $n$.
\end{theorem}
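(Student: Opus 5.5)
We plan to derive the bound from the known superpolynomial lower bound for the knapsack / subset-sum polytope, namely that there are instances of the \emph{exact} knapsack problem whose feasible-solution polytope has extension complexity $2^{\Omega(\sqrt{n})}$ (Pokutta--Van Vyve, and related results via reductions from the correlation/cut polytope of Fiorini et al.~\cite{FioriniMPTW15}). The core of the argument is a face-preserving reduction. An extended formulation of a polytope also yields extended formulations of its faces and of its images under affine maps. So it suffices to exhibit the hard polytope as a face of $\mathrm{PARTITION}(S',B')$, or as an affine projection of one.

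\textbf{Step 1: choose the source polytope.} Take the subset-sum polytope $\mathrm{SUBSETSUM}(a,\beta)=\mathrm{conv}\{x\in\{0,1\}^n : a^\top x=\beta\}$, for instances where its extension complexity is superpolynomial. Such instances come from the standard chain $\mathrm{CUT}(K_m)\to$ $3$-dimensional matching / 3SAT $\to$ subset sum. In each stage of this chain the feasible-solution polytope of the earlier problem is a projection of a face of the later one, and only polynomial blow-up is incurred. Since extension complexity is monotone under taking faces and projections, the bound $2^{\Omega(m)}$ for the cut polytope transfers.

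\textbf{Step 2: reduce subset sum to partition.} Given $a_1,\dots,a_n$ with total $A$ and target $\beta$, add two numbers $a_{n+1}=2A-\beta$ and $a_{n+2}=A+\beta$. The new total is $4A$, so we set $B=2A$. A subset summing to $2A$ cannot contain both new elements, since together they already sum to $3A>2A$. It also cannot contain neither of them, since then its sum is at most $A<2A$. It therefore contains exactly one of the two new elements:
\begin{itemize}
\item if it contains $a_{n+1}$, its old elements sum to $\beta$;
\item if it contains $a_{n+2}$, its old elements sum to $A-\beta$, i.e.\ it is the complement of a $\beta$-solution.
\end{itemize}
Consequently the hyperplane $x_{n+1}=1$ cuts out a face of $\mathrm{PARTITION}(S,B)$. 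This face is valid because $x_{n+1}\le 1$ holds on the whole polytope. The face is affinely isomorphic to $\mathrm{SUBSETSUM}(a,\beta)$ via projection onto the first $n$ coordinates. Hence $\mathrm{xc}(\mathrm{PARTITION}(S,B))\ge \mathrm{xc}(\mathrm{SUBSETSUM}(a,\beta))$, which is superpolynomial in $n+2$.

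\textbf{Main obstacle.} The delicate step is Step 1: making sure each reduction in the chain is genuinely ``face of a projection'' and keeps the numbers polynomially many. This is essentially the argument of Avis--Tiwary / Pokutta--Van Vyve, and I would cite it rather than redo it. The partition gadget in Step 2 is routine once one checks that the constraint $x_{n+1}=1$ defines a face. Positivity of the new integers holds whenever $0<\beta<A$, which we may assume without loss of generality.
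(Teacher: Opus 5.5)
Your proposal is correct and follows the paper's proof: it cites the superpolynomial extension-complexity lower bound for subset-sum polytopes (Avis--Tiwary), adds two padding numbers, and obtains $\mathrm{SUBSETSUM}$ as the face of $\mathrm{PARTITION}$ cut out by the valid inequality $x_{n+1}\le 1$. The only difference is the choice of padding numbers: you use $2A-\beta$ and $A+\beta$ with $B=2A$, while the paper uses $A$ and $2\hat{B}$ with $B=\hat{B}+A$; both choices work, and you spell out the case analysis (a solution contains exactly one new element) more explicitly than the paper does.
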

\begin{proof}
        Let $\hat{S}=\{a_1,\ldots,a_n\}$ be a set of positive integers, and let $\hat{B}\in\mathbb{N}$. The SUBSET-SUM problem asks whether there exists a subset $I\subseteq [n]$ such that $\sum_{i\in I}a_i=\hat{B}$. We will call the convex hull of the characteristic vectors of the feasible solutions of the instance $(\hat{S},\hat{B})$ the polytope $\mathrm{SUBSETSUM(\hat{S},\hat{B}).}$ It is known \cite{AvisTiwary:15} that there are infinitely many values of $n\in\mathbb{N}$ so that there exists an instance $(\hat{S},\hat{B})$ of SUBSET-SUM problem with $|\hat{S}|=n$ such that $\mathrm{xc}(\textrm{SUBSETSUM}(\hat{S},\hat{B}))$ is superpolynomial in $n.$

    Given an instance $(\hat{S},\hat{B})$ of the SUBSET-SUM problem we use a standard reduction \cite{GareyJ:79} to PARTITION problem to create an instance $(S,B)$ of the partition problem such that $\mathrm{xc}(\mathrm{SUBSETSUM}(\hat{S},\hat{B}))\leqslant \mathrm{xc}(\mathrm{PARTITION}(S,B)),$ thus proving the theorem.

    Given $(\hat{S},\hat{B})$ define $S=\hat{S}\cup\{a_{n+1},a_{n+2}\}$ where $a_{n+1}=\sum_{i=1}^na_i$ and  $a_{n+2}=2\hat{B}$, and $B=\hat{B}+\sum_{i=1}^na_i$. The partition instance $(S,B)$ has a solution if and only if the subset sum instance $(\hat{S},\hat{B})$ has a solution. If $S'\subseteq [n]$ is a feasible solution of the subset sum instance, then $S'\cup\{n+1\}$ is a feasible solution of the partition instance. Furthermore, if $S'\subseteq[n+2]$ is a feasible solution of the partition instance such that $n+1\in S'$ then $S'\setminus\{n+1\}$ is a feasible solution of the subset sum instance. Therefore, $\mathrm{SUBSETSUM}(\hat{S},\hat{B})$ is a face of $\mathrm{PARTITION}(S,B)$ obtained using the valid inequality $x_{n+1}\leqslant 1.$ Therefore, $\mathrm{xc}(\mathrm{SUBSETSUM}(\hat{S},\hat{B}))\leqslant \mathrm{xc}(\mathrm{PARTITION}(S,B)).$
\end{proof}

\begin{theorem}\label{thm:xc_lb}
    For infinitely many integers $n\in\mathbb{N}$ there exists a tree $T$ on $n$ vertices, a set of terminals $\Gamma,$ and a weight function $w$ on the edges  such that the extension complexity of $P(T,\Gamma,w)$ is superpolynomial in~$n$.
\end{theorem}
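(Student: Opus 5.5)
We reduce from Theorem~\ref{thm:xc_lb_partition} by showing that $\mathrm{PARTITION}(S,B)$ is, up to an affine map (projection plus a linear change of coordinates), a face of $P(T,\Gamma,w)$ for a suitable tree built from the Svitkina--Tardos gadget used in Theorem~\ref{thm:np-hard-bipartite}. We use two standard facts: the extension complexity of a face is at most that of the polytope, and the extension complexity does not increase under linear projections.

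\textbf{Construction.} Since the graph must be a tree, we cannot attach each $v_i$ to both $t_1$ and $t_2$. We instead use a tree of Svitkina--Tardos type. Take a path $t_1 - c - t_2$ through a central non-terminal $c$, or better, a star centred at $c$ with two heavy terminal leaves $t_1,t_2$. For each $i\in[n]$ attach $v_i$ to $c$ by an edge of weight $a_i$, and attach gadget terminals to $v_i$ so that $v_i$ is forced to join $c$'s part. Then only the edges $ct_1, ct_2$ decide the structure, which is too coarse. So we rather follow the original Svitkina--Tardos tree reduction for min-max multiway cut on trees. There, the feasible low-cost solutions assign each item either "left" or "right", and each choice is recorded by which of two specific edges of the gadget for item $i$ is cut. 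The set of cut edges (for example, the edge of gadget $i$ on the $t_1$ side) then determines a subset $I\subseteq[n]$, and the load on a distinguished pair of parts equals $\sum_{i\in I}a_i$ and $\sum_{i\notin I}a_i$ respectively.

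\textbf{Choosing the face.} The optimal cost is exactly $B$ iff a partition exists, and all cuts have $\nu(C)\ge B$, because $\max(\sum_I a_i,\sum_{\bar I}a_i)\ge B$ and gadget misassignment gives cost $>B$. Hence $\nu\ge B$ is a valid inequality for $P(T,\Gamma,w)$, and the face $F=\{\nu=B\}$ is the convex hull of the $(\chi^C,B)$ for the cost-$B$ connected cuts. These cuts correspond bijectively to the sets $I$ with $\sum_I a_i=B$, and the coordinate $\chi^C_{e_i}$ of a designated edge $e_i$ in gadget $i$ equals $[i\in I]$. Projecting $F$ onto the coordinates $(e_i)_{i\in[n]}$ therefore maps it onto $\mathrm{PARTITION}(S,B)$. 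It follows that $\mathrm{xc}(\mathrm{PARTITION}(S,B))\le \mathrm{xc}(F)\le \mathrm{xc}(P(T,\Gamma,w))$. The tree has $O(n)$ vertices, so a superpolynomial bound in $n$ transfers.

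\textbf{Main obstacle.} The delicate step is designing the tree so that three things hold simultaneously. First, cost exactly $B$ forces the intended structure (no gadget vertex changes sides in an unintended way). Second, each cost-$B$ cut is determined by $I$, or at least that every cut on the face has its $e_i$-coordinates equal to the indicator of a balanced $I$. Third, every balanced $I$ is realised. I would first adopt the Svitkina--Tardos tree gadget verbatim: heavy edges force each $v_i$ into one of two parts, and the weights are set so that any deviation costs more than $B$. I would then verify the bijection. If uniqueness fails (several cuts per $I$), this is harmless, since the projection still lands exactly on $\mathrm{PARTITION}(S,B)$ as long as the $e_i$-coordinates encode $I$.
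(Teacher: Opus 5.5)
Your framework matches the paper's: exhibit $\mathrm{PARTITION}(S,B)$ as a projection of a face of $P(T,\Gamma,w)$, and use that extension complexity does not increase under passing to faces or projections. The gap is the construction on which everything rests. You propose to adopt the Svitkina--Tardos tree reduction ``verbatim'', and that cannot work here. Their tree reduction is for the \emph{unconstrained} min-max multiway cut. There the parts of the two ``bin'' terminals absorb item vertices $v_i$ that are not adjacent to them. Such parts are disconnected, so these partitions are not points of $P(T,\Gamma,w)$. This is exactly why the paper, in Theorem~\ref{thm:np-hard-bipartite}, must add weight-zero edges $t_1v_i$, $t_2v_i$, which leaves the class of trees. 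You note this obstacle yourself and then assume the tree gadget circumvents it.

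In a tree with connected parts, putting $v_i$ into $t_1$'s part drags the whole $v_i$--$t_1$ path into that part. So two global terminals cannot independently absorb an arbitrary subset $I$ of the items. For instance, for items hanging off the $t_1$--$t_2$ path, which bin each item can reach is dictated by a single split point on that path. Moreover, your intermediate claim, that on your tree the optimum equals $B$ if and only if a balanced partition exists, would prove weak NP-hardness of the problem on trees. The paper explicitly leaves that open. So the step you call ``delicate'' is the entire difficulty, and the proposal does not resolve it.

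The missing idea is that the min-max objective does not need to enforce $\sum_{i\in I}a_i=B$; a face cut out by a \emph{linear} valid inequality can do it.

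\begin{itemize}
\item The paper uses a single global root terminal. For each $i$ it attaches a five-vertex gadget: one non-terminal $v_i$ and four terminals, with edges $e_i^0,\dots,e_i^4$ of weights $a_i/2,a_i/2,a_i/2,3B/2-a_i,3B/2-a_i$. In this gadget $v_i$ either joins the root or a local terminal.
\item Every connected cut then costs at least $3B/2$.
\item The root's boundary equals the linear function $\sum_i\bigl(w(e_i^1)x_{e_i^1}+w(e_i^2)x_{e_i^2}\bigr)=B+w(S(C))/2$, where $S(C)=\{i \mid e_i^0\notin C\}$.
\item On the face $F=\{z=3B/2\}$ this gives only $w(S(C))\le B$. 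Equality $w(S(C))=B$ comes from a further face of $F$: the one on which the valid inequality $\sum_i\bigl(w(e_i^1)x_{e_i^1}+w(e_i^2)x_{e_i^2}\bigr)\le 3B/2$ is tight.
\item The map $y_i=1-x_{e_i^0}$ projects this face onto $\mathrm{PARTITION}(S,B)$.
\end{itemize}

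Because the second restriction uses a linear inequality rather than the objective, this gives the extension-complexity bound without implying NP-hardness on trees.
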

\begin{proof}
    Let $S=\{a_1,\ldots,a_n\}$ be a set of positive integers, and let $B\in\mathbb{N}$ be such that $\sum_{i\in[n]}s_i=2B$. So $(S,B)$ is an instance of the partition problem. That is, we are interested in subsets $S'\subset [n]$ such that $\sum_{i\in S'}a_i=B.$ We will show that $\mathrm{PARTITION(S,B)}$ is a face of $P(G,\Gamma,w)$ for a suitable graph $G$, a set of terminals $\Gamma,$ and weights on its edges $w:E(G)\to\mathbb{R}.$ In fact, the graph that we build is a tree. Thus, for each instance $(S,B)$ of the partition problem for which $\mathrm{PARTION}(S,B)$ has superpolynomial extension complexity, we get a polytope for the min-max connected multiway cut that has superpolynomial extension complexity.

    We build a weighted tree as follows. We have a special vertex called the root terminal in $T.$ For each element $a_i\in S$ we attach a five-vertex subtree consisting of four terminals and one non-terminal, as shown in Figure \ref{fig:xc}. 
    \begin{figure}[htbp]
        \centering
        \includegraphics[width=0.35\textwidth]{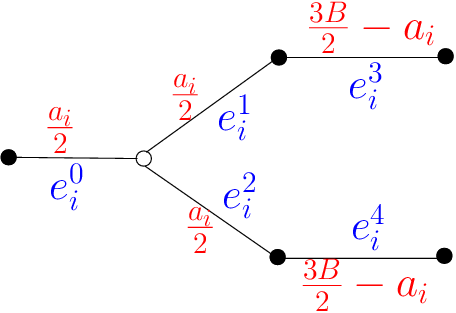}
        \caption{Superpolynomial extension complexity gadget; solid circles are terminals, hollow circles are non-terminals}
        \label{fig:xc}
    \end{figure}
We label the edges $e_i^0, e_i^1, e_i^2, e_i^3,$ and $e_i^4$ as shown. The weights of the edges are as follows: $w(e_i^0)=w(e_i^1)=w(e_i^2)=a_i/2, w(e_i^3)=w(e_i^4)=3B/2-a_i.$

 Each connected cut in $T$ has a value at least $3B/2$, so $z=3B/2$ is a face of $P(T,\Gamma,w)$ where $z$ is the last coordinate. Let us call this face $F$. For any connected cut $C$, consider the subset $S(C)=\{i~|~e_i^0\notin C\}.$ For the cut $C$ we see that $\sum_{i=1}^n (w(e_i^1)x_{e_i^1}+w(e_i^2)x_{e_i^2}) = B+w(S(C))/2$ which also equals the size of the boundary of the root terminal which for any vertex in $F$ is at most $3B/2.$ Therefore, consider the face $G$ of $F$ given by the hyperplane $\sum_{i=1}^n (w(e_i^1)x_{e_i^1}+w(e_i^2)x_{e_i^2})=3B/2.$ $G$ contains exactly those connected cuts in $T$ that correspond to subsets with sum exactly equal to $B.$ The projection map given by $y_i=1-x_i^0$, $\forall i\in [n]$, maps $G$ to the partition polytope of instance $(S,B).$ Therefore $\textrm{PARTITION}(S,B)$ is projection of a face of $P(T,\Gamma,w)$ and so by Theorem \ref{thm:xc_lb_partition} we have, for infinitely many $n$, a tree and weights so that the extension complexity of the corresponding min-max connected multiway cut polytope is superpolynomial in $n$.
\end{proof}

\subsection{Weighted Exact Min-Max Connected Cut}
Finally, we show that finding a connected multiway cut of specified size is NP-hard for trees. 
\begin{theorem}
    Given a weighted tree and $B\in\mathbb{N}$ it is NP-hard to decide whether there is a connected cut of cost exactly $B$.
\end{theorem}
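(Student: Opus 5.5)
The plan is a polynomial reduction from SUBSET-SUM, which is NP-hard. Given positive integers $a_1,\ldots,a_n$ and a target $\hat B$, we build a weighted tree $T$ with a terminal set $\Gamma$ and a number $B$. The goal is that $T$ has a connected multiway cut of cost exactly $B$ if and only if some $I\subseteq[n]$ satisfies $\sum_{i\in I}a_i=\hat B$. We read ``connected cut'' as a connected multiway cut for a given terminal set, as in the rest of this section, with cost measured by the min-max objective $\cost(C)=\max_j\delta(S_j)$. The idea is to force one designated part, the part of a root terminal, to always carry the largest boundary. The cost then equals that part's boundary, which will be an affine function of a subset sum.

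The construction is as follows. Set $L=\sum_{i}a_i+n+1$.
\begin{itemize}
\item Take a root terminal $r$ and an extra terminal $s$ joined to $r$ by an edge of weight $L$.
\item For each $i\in[n]$, add a non-terminal $v_i$ adjacent to $r$ with weight $a_i+1$, and a terminal $t_i$ adjacent to $v_i$ with weight $1$.
\item Put $B=L+n+\hat B$.
\end{itemize}
This is a tree and all weights are positive integers, as required by $w:E\to\NN^+$. Feasible cuts are easy to describe. Every $v_i$ joins either the part of $r$ or the part of $t_i$, and each such choice yields connected parts, so the connected multiway cuts correspond exactly to subsets $I=\{i : v_i \text{ joins } t_i\}$.

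Next we compute the boundaries. The edge $rs$ is always cut, so the root part has boundary
\[
\delta(S_r)=L+\sum_{i\in I}(a_i+1)+\sum_{i\notin I}1=L+n+\sum_{i\in I}a_i .
\]
The other parts are bounded as follows: $\delta(S_s)=L$, and $\delta(S_{t_i})$ is either $a_i+1$ (if $i\in I$) or $1$ (if $i\notin I$). By the choice of $L$, each of these is at most $L\le\delta(S_r)$. Hence the cost of the cut is exactly $L+n+\sum_{i\in I}a_i$. This equals $B$ if and only if $\sum_{i\in I}a_i=\hat B$, which gives both directions of the equivalence at once.

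The construction has polynomial size, with weights written in binary, which matches the weak NP-hardness of SUBSET-SUM. The only delicate step is guaranteeing that the root's boundary dominates every other part's boundary for every cut, so that ``cost exactly $B$'' is governed by the subset sum alone. The pendant terminal $s$ with the large weight $L$ handles this, and the unit edges $v_it_i$ only shift the root's boundary by the constant $n$. I expect no further obstacle.
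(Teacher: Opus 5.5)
Your proof is correct and follows the same idea as the paper: a heavy pendant terminal attached to the root makes the root's part always carry the maximum boundary, so the cost is an affine function of a subset sum. The paper reuses the five-vertex gadgets from the extension-complexity construction and reduces from Partition, with target $7B/2$; your two-vertex gadgets with weights $a_i+1$ and $1$ instead give a simpler, self-contained, integer-weight reduction from Subset Sum.
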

\begin{proof}
    Given an instance of the partition problem $(S,B)$, we consider the tree used in the proof of Theorem~\ref{thm:xc_lb}. By adding an additional terminal attached to the root terminal by an edge with weight $2B$, we can ensure that the terminal with the maximum boundary size is the root terminal. Furthermore, for any connected cut $C$ and the corresponding subset $S(C)$, the size of the boundary of the root terminal is exactly $3B+w(S(C))/2.$ Thus, one can check whether the partition instance is a yes instance or not by checking whether there is a connected sum with value $7B/2.$
\end{proof}

\section{Algorithms}\label{sec:algorithms}
Given the $W[1]$-hardness of the problem on graphs of bounded treewidth, under the standard 
assumption that FPT$\not=W[1]$, an XP algorithm parameterized by the treewidth is the best 
one can hope to achieve for these graphs. We describe such an algorithm, and a few other algorithmic results, 
in this section.

\subsection{XP Algorithms for Bounded Treewidth Graphs}\label{subsec:xpAlg}
We begin by giving an informal overview of the algorithm. The algorithm performs dynamic programming on a nice tree decomposition, storing at each node $\alpha$ a set of feasible configurations that summarize partial solutions (i.e., a partition) on the corresponding subgraph. 
Each configuration encodes a partition of the bag $B(\alpha)$, boundary sizes and terminal information of parts intersecting with $B(\alpha)$, an upper bound on boundary sizes of the other parts belonging to the current subgraph, and adjacency relations among parts
intersecting $B(\alpha)$.
Starting from leaf nodes, configurations are propagated bottom-up using rules for introduce, forget, and join nodes that combine or update partitions. It is guaranteed that among the constructed configurations at each node, the one corresponding to an optimal global solution is preserved. At the root, selecting the configuration minimizing the maximum boundary value yields an optimal solution, giving an XP-time algorithm.

The crucial observation here is that at any tree node $\alpha$ for any partition of vertices, if a part uses vertices (in bags of the nodes) completely below $\alpha$ then these parts are essentially frozen as far as nodes above $\alpha$ are concerned, so we do not need to keep track of them. And, for parts that may change in future as more vertices are considered (as we move up the tree), these must be connected ``via the nodes in the current bag''. But then we need to consider only the information relevant to the current bag.

Now we describe the algorithm more precisely.

\medskip

A \emph{tree decomposition} of a graph $G=(V,E)$~\cite{Kloks:94} 
is a pair $(T, B)$, where $T$ is a rooted tree and $B$ is a mapping
$B: V(T) \rightarrow 2^V$ satisfying
\begin{itemize}
        \item for any $uv \in E$, there exists $\alpha \in V(T)$ such that
        $u, v \in B(\alpha)$,
        \item if $v \in B(\alpha)$ and $v \in B(\beta)$, then $v \in B(\gamma)$ for all
        $\gamma$ on the path from $\alpha$ to $\beta$ in $T$.
\end{itemize}
We use the convention that the vertices of the tree are called nodes and are labeled with small Greek letters, 
and the sets $B(\alpha)$ are called \emph{bags}.

The {\em treewidth $tw(T, B)$ of a tree decomposition} $(T, B)$ is
the size of the largest bag of $(T, B)$ minus one.
The {\em treewidth $tw(G)$ of a graph} $G$ is the
minimum treewidth over all possible tree decompositions of $G$.

A \emph{nice tree decomposition}~\cite{Kloks:94} 
is a tree decomposition with one
special node $\rho$ called the \emph{root}, in which each node is
one of the following types:
\begin{itemize}
        \item \emph{Leaf node}: a leaf $\alpha$ of $T$ with $B(\alpha) = \emptyset$.
        \item \emph{Introduce node}: an internal node $\alpha$ of $T$ with one
        child $b$ for which $B(\alpha) = B(\beta) \cup \{v\}$ for some $v \not\in
        B(\alpha)$.
        \item \emph{Forget node}: an internal node $\alpha$ of $T$ with one child
        $b$ for which $B(\alpha) = B(\beta) \setminus \{v\}$ for some $v \in B(\beta)$.
        \item \emph{Join node}: an internal node $\alpha$ with two children $b$
        and $c$ with $B(\alpha) = B(\beta) = B(\gamma)$.
\end{itemize}

For any graph $G$ of treewith $\tau$ on $n$ vertices, a nice tree decomposition of $G$ of width $\tau$
with at most $8n$ nodes can be computed
in time $f(\tau) \cdot n$, for some computable function $f$~\cite{Bodlaender:93,Kloks:94}.

Given a rooted tree decomposition $(T, B)$ and a node $\alpha \in V(T)$,
we denote by $T_\alpha$ the subtree of $T$ rooted in $\alpha$, 
by $V_\alpha=\bigcup_{\beta \in V(T_\alpha)} B(\beta)$ the union of bags of all nodes
in $T_\alpha$, and by $G_\alpha=G[V_\alpha]$ the subgraph of $G$ induced by $V_\alpha$. 

For a set $V$, its subset $S$, and a partition $\CP\in \Pi(V)$,
the \emph{restriction} of $\CP$ on $S$ is the partition 
$\CP|_S=\{X\cap S \mid X\in \CP \mbox{ and } X\cap S\not = \emptyset\}$ of $S$. 
A partition $\CP'\in \Pi(V)$ is an \emph{extension of}
a partition $\CP\in \Pi(S)$ if $\CP=\CP'|_S$; in this case, if $U\in \CP$,
$U'\in \CP'$ and $U\subseteq U'$, we also say that the set $U'$ is an
\emph{extension of the set} $U$. 

Let $G=(V,E)$ be a graph, $\Gamma=\{1,\ldots,k\}\subseteq V$ a set of terminals, 
and $(T,B)$ a nice tree decomposition of $G$ of treewidth $\tau$, rooted at a node $\rho$.
For a node $\alpha\in V(T)$, let $\Gamma_\alpha=\Gamma\cap V_\alpha$, and let $\bot$ be 
a symbol not appearing in $V$. 
A \emph{configuration of a node} $\alpha\in V(T)$ with a bag $B(\alpha)$, is a 
tuple $(D,\CP,s,b,a)$,  
where $D\in [m]$, $\CP\in \Pi(B(\alpha))$, $s\in (\Gamma_\alpha\cup\{\bot\})^\CP$, $b\in [m]^\CP$, 
$a\in \{0,1\}^{\binom{\CP}{2}}$,
and for each $U\in \CP$, it holds that $U\cap \Gamma_\alpha\subseteq \{s_U\}$.
The configuration $(D,\CP,s,b,a)$ is \emph{feasible} if there exists an extension  
$\CP'\in \Pi(V_\alpha)$ of the partition $\CP$ 
such that 
\begin{enumerate}
    \item for each $U'\in \CP'$, $G[U']$ is connected and $|U'\cap \Gamma_\alpha|\leq 1$,  
    \item for each $U'\in \CP'$, if $U'\cap B(\alpha)=\emptyset$, then $E(U',V_\alpha\setminus U') \leq D$ and $|U'\cap \Gamma_\alpha|=1$,
    \item for each $U'\in \CP'$, if $U'\cap B(\alpha)\not =\emptyset$, then $E(U',V_\alpha\setminus U')=b_U$,
    \item for each $(U,W)\in \binom{\CP}{2}$, $a_{(U,W)}=0$ if $E(U,W)=\emptyset$, and $a_{(U,W)}=1$ otherwise.
\end{enumerate}

The intended meaning of the feasible configuration $(D,\CP,s,b,a)$ of the node $\alpha$,
and of the extension of $\CP'$ of $\CP$ is as follows: 
The number $D$ is an upper bound on the largest boundary size of those parts of the partition $\CP'$ that do not intersect with $B(\alpha)$. The partition $\CP$ is a restriction of the partition $\CP'$ of $V_\alpha$ to $B(\alpha)$. 
For each $U\in \CP$, $s_U$ specifies the unique terminal in the extension $U'\in \CP'$
of $U$, if there is one; otherwise its value is $\bot$. 
For each $U\in \CP$, the number $b_U$ gives the boundary size of the extension 
$U'\in \CP'$ of $U$, and for each $(U,W)\in \binom{\CP}{2}$, $a_{(U,W)}$ captures
information about the adjacency of the extensions $U'$ and $W'$ of $U$ and $W$:
the value is $1$ if there is an edge $uv\in E(G_\alpha)$ with $u\in U$ and $v \in W$,
and is $0$ otherwise.

The following lemma follows immediately from the definition of a feasible configuration.
\begin{lemma}\label{lem:solution}
If $(D,\CP,s,b,a)$ is a feasible configuration of the root $\rho$ of $T$ such that
for each $U\in \CP$, $s_U\not=\bot$, then there
exists a solution of the given min-max connected multiway cut instance of cost 
$\max\{D,\max_{U\in \CP} b_U\}$.
\end{lemma}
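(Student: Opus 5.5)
We unfold the definition of feasibility at the root $\rho$ and check that the witnessing extension is a connected multiway cut of the claimed cost. By feasibility of $(D,\CP,s,b,a)$ there is an extension $\CP'\in\Pi(V_\rho)$ of $\CP$ satisfying conditions 1--4 of the definition. The tree decomposition covers every vertex, so $V_\rho=V$, $G_\rho=G$ and $\Gamma_\rho=\Gamma$. Hence $\CP'$ is a partition of all of $V$. For each $U'\in\CP'$ we have $V_\rho\setminus U'=V\setminus U'$, so the quantities $E(U',V_\alpha\setminus U')$ in conditions 2 and 3 are exactly the boundary weights $\delta(U')$ in $G$.

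Next we verify that $\CP'$ is a connected multiway cut. Condition 1 says that every part is connected and contains at most one terminal. We claim every part contains exactly one terminal, and split the parts into two kinds. A part $U'$ with $U'\cap B(\rho)=\emptyset$ contains exactly one terminal by condition 2. A part $U'$ meeting $B(\rho)$ is the extension of a unique $U\in\CP$. The intended meaning of $s_U$ is that it names the unique terminal of $U'$, and the hypothesis $s_U\neq\bot$ then gives $|U'\cap\Gamma|=1$.

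Since every part contains exactly one terminal and there are $k=|\Gamma|$ terminals, $\CP'$ has exactly $k$ parts. Labelling the part containing $t_i$ as $S_i$ yields a connected multiway cut. Its cost is $\max_{U'\in\CP'}\delta(U')$. Parts not meeting the root bag contribute at most $D$ by condition 2, and the extension of each $U\in\CP$ contributes exactly $b_U$ by condition 3. So the cost is at most $\max\{D,\max_{U\in\CP}b_U\}$, and equals it whenever $D$ is attained (or when one takes the tightest $D$).

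The main obstacle is the link between $s_U$ and the actual terminal content of $U'$. As written, the feasibility conditions only force $U\cap\Gamma_\alpha\subseteq\{s_U\}$ and $|U'\cap\Gamma_\alpha|\le 1$; they do not explicitly force $s_U\in U'$ when the terminal has already been forgotten. I would therefore read the definition together with its stated intended meaning, namely that $s_U$ is the unique terminal of $U'$ whenever one exists. Equivalently, I would add the implicit requirement ``$s_U\in U'$ if $s_U\neq\bot$, and $U'\cap\Gamma_\alpha=\emptyset$ otherwise'' to condition 1; with it, the argument above goes through verbatim. A second minor point is that the lemma claims cost equal to $\max\{D,\ldots\}$ while $D$ is only an upper bound. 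I would either state the result as ``cost at most'' or note that the minimization at the root selects the tightest $D$, so the two coincide for the configuration actually used.
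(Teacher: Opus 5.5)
Your proof is correct and is exactly the unfolding of the definition that the paper has in mind; the paper gives no argument beyond saying the lemma follows immediately from the definition of a feasible configuration. At the root $V_\rho=V$; parts avoiding $B(\rho)$ get their unique terminal and the bound $D$ from condition 2, and parts meeting $B(\rho)$ get boundary $b_U$ from condition 3. Both of your caveats are well taken: the literal feasibility conditions never tie $s_U$ to the extension $U'$, so the stated intended meaning of $s$ must be used, and what actually follows is a solution of cost \emph{at most} $\max\{D,\max_{U\in\CP}b_U\}$, which is all the algorithm's correctness argument needs.
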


From now on, let  $\CP^\star=\{V_1,\ldots,V_k\}$ denote a fixed optimal partition for the instance
$G=(V,E)$, $\Gamma$ of the min-max connected multiway cut. Given a subset $U\subseteq V$ 
satisfying $U\subseteq V_i$, for some part $V_i$ of $\CP^\star$, we denote by $\bar U$ the 
corresponding part $V_i$ (which is the extension of $U$).
For each node $\alpha$ of the tree decomposition $(T,B)$ of $G$, we define the following quantities: 
\begin{itemize}
    \item $D_\alpha=\max \{|E(U,V_\alpha\setminus U)| \mid U\in \CP^\star \mbox{ s.t. } U\cap B(\alpha)=\emptyset \mbox{ and } U\subseteq V_\alpha\}$\footnote{By convention, for purposes of this section, we define $\max \emptyset=0$.},
    \item $\CP_\alpha=\CP^\star|_{B(\alpha)}$,
    \item for each $U\in \CP_\alpha$, if $\Gamma\cap V_\alpha\cap \bar U$ is non-empty, let $s_U$ be the unique terminal in it, and let $s_U=\bot$ otherwise,
    \item for each $U\in \CP_\alpha$, let $b_U=|E(\bar U\cap V_\alpha, V_\alpha\setminus \bar U)|$,
    \item for each $(U,W)\in \binom{\CP_\alpha}{2}$, let $a_{(U,W)}=1$ if $E(\bar U\cap V_\alpha,V_W\cap V_\alpha)>0$,
    and let $a_{(U,W)}=0$ otherwise.
\end{itemize}
A routine verification shows that $C_\alpha=(D_\alpha, \CP_\alpha, s, b, a)$ is a feasible configuration 
of the node $\alpha$ and we call it an \emph{optimal configuration} of $\alpha$.

Consider a join node $\alpha$ of $T$ with children $\beta$ and $\gamma$, and let
$C_\beta=(D_\beta, \CP', s', b', a')$ and 
$C_\gamma=(D_\gamma, \CP'', s'', b'', a'')$ 
be configurations of the nodes $\beta$ and $\gamma$.
The two configurations $C_\beta$ and $C_\gamma$ are \emph{compatible}, if 
$\CP'=\CP''$, and for each $U\in \CP'$, $s'_U=\bot$, $s''_U=\bot$, or $s'_U=s''_U$.

\begin{figure}[hbt]
  \centering
  \includegraphics[width=0.9\textwidth]{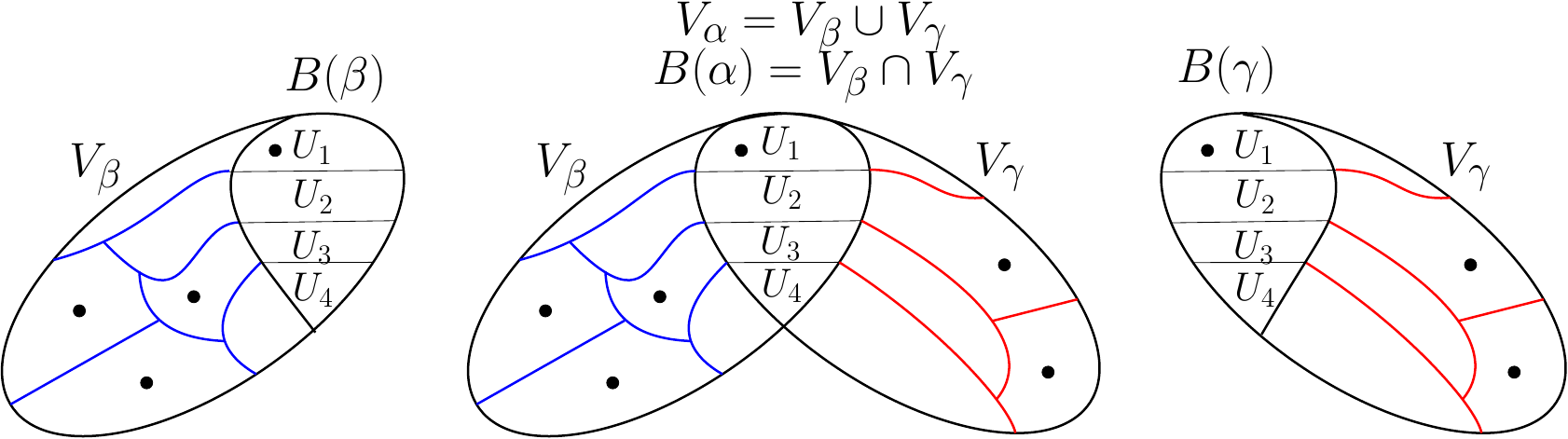}
  \caption{A join node $\alpha$ with children $\beta$ and $\gamma$.
  The left figure depicts the partition $\CP_\beta$ of $V_\beta$, the right figure the partition $\CP_\gamma$
  of $V_\gamma$,
  and the middle figure the partition $\CP_\alpha$ of $V_\alpha$.
  All three figures also depict the partition $\CP=\{U_1, U_2, U_3, U_4\}$ of $B(\alpha)=B(\beta)=B(\gamma)$.
  The terminals are represented by the black dots.}
  \label{fig:join-node-detailed}
\end{figure}

Let $\alpha$ be a join node of $T$ with children $\beta$ and $\gamma$, and
$C_\beta=(D_\beta, \CP, s', b', a')$ and 
$C_\gamma=(D_\gamma, \CP, s'', b'', a'')$ 
their compatible feasible configurations.
The \emph{join derivation} from the two compatible configurations $C_\beta$ and $C_\gamma$ is the configuration
$C_\alpha=(D, \CP, s, b, a)$ defined as follows: 
\begin{itemize}
    \item $D=\max\{D_\beta, D_\gamma\}$, and 
    \item for each $U\in \CP$, $s_U=s'_U$ if $s'_U\in \Gamma_\alpha$, $s_U=s''_U$ if $s''_U\in \Gamma_\alpha$, and $s_U=\bot$ otherwise, and
    \item for each $U\in \CP$, $b_U=b'_U+b''_U-|E(U,B(\alpha)\setminus U)|$, and 
    \item for each $(U,W)\in\binom{\CP}{2}$, $a_{(U,W)}=1-(1-a'_{(U,W)})\cdot (1- a''_{(U,W)})$.
\end{itemize}

\begin{lemma}[Join configuration derivation]\label{lem:join}
The join derivation $C_\alpha=(D, \CP, s, b, a)$ from feasible compatible configurations $C_\beta$ and $C_\gamma$ 
is a feasible configuration of the node $\alpha$.
\end{lemma}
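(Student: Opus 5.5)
We plan to construct the witnessing extension directly from the witnesses of $C_\beta$ and $C_\gamma$. By feasibility, there is an extension $\CP_1\in\Pi(V_\beta)$ of $\CP$ satisfying conditions 1--4 for $C_\beta$, and an extension $\CP_2\in\Pi(V_\gamma)$ of $\CP$ satisfying them for $C_\gamma$. The key structural fact is a standard property of tree decompositions: $V_\beta\cap V_\gamma=B(\alpha)$, and there is no edge of $G$ between $V_\beta\setminus B(\alpha)$ and $V_\gamma\setminus B(\alpha)$. This holds because any such edge would have to lie in a common bag, which contradicts the path property of the decomposition.

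We define $\CP'\in\Pi(V_\alpha)$ as follows. For each $U\in\CP$, let $U_1\in\CP_1$ and $U_2\in\CP_2$ be its extensions, and put $U'=U_1\cup U_2$. In addition, $\CP'$ contains every part of $\CP_1$ or $\CP_2$ that is disjoint from $B(\alpha)$. These sets partition $V_\alpha$, since the two families agree on $B(\alpha)$ and are otherwise disjoint, and $\CP'|_{B(\alpha)}=\CP$.

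We then verify the four conditions in turn.

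\begin{itemize}
\item \emph{Connectivity.} $G[U']$ is connected because $G[U_1]$ and $G[U_2]$ are connected and share the nonempty set $U$.
\item \emph{Terminals.} A part $U'$ contains at most one terminal. Terminals of $U_1$ lie in $\{s'_U\}$ and those of $U_2$ in $\{s''_U\}$, and compatibility forces these to agree or one to be $\bot$. This also shows that $s_U$, as defined, is the unique terminal of $U'$, or $\bot$.
\item \emph{Frozen parts.} A part disjoint from $B(\alpha)$, say $X\in\CP_1$, lies in $V_\beta\setminus B(\alpha)$. Since it has no neighbours in $V_\gamma\setminus B(\alpha)$, its boundary in $V_\alpha$ equals its boundary in $V_\beta$, which is at most $D_\beta\le D$. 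It also contains exactly one terminal.
\item \emph{Adjacency.} Every edge of $G_\alpha$ between $U'$ and $W'$ lies in $G_\beta$ or in $G_\gamma$, which gives the OR formula for $a_{(U,W)}$.
\end{itemize}

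The main step is the boundary count $b_U$. We partition the edges of $E(U',V_\alpha\setminus U')$ by location. An edge with an endpoint in $V_\beta\setminus B(\alpha)$ lies entirely in $G_\beta$, and on the $\beta$ side it is counted in $b'_U$. Symmetrically, edges with an endpoint in $V_\gamma\setminus B(\alpha)$ are counted in $b''_U$. No edge has endpoints in both of these sets. The remaining edges have both endpoints in $B(\alpha)$; these are exactly $E(U,B(\alpha)\setminus U)$, and they are counted in both $b'_U$ and $b''_U$.

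We must also check that the boundary of $U_1$ within $V_\beta$ coincides with that of $U'$ restricted to $G_\beta$. This holds because $U'\cap V_\beta=U_1$, which uses $V_\beta\cap V_\gamma=B(\alpha)$ and the fact that $\CP_1$ and $\CP_2$ agree on $B(\alpha)$. Inclusion--exclusion then gives $b_U=b'_U+b''_U-|E(U,B(\alpha)\setminus U)|$. If the boundaries are weighted, the same argument applies with weights in place of edge counts.
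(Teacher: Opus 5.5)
Your proposal is correct and takes the same approach as the paper: you build the same extension of $\mathcal{P}$ to $V_\alpha$ (the frozen parts of both witnesses together with the merged sets $U_\beta\cup U_\gamma$) and verify the feasibility conditions for it. You also fill in what the paper treats as immediate: the separation property $V_\beta\cap V_\gamma=B(\alpha)$ with no edges between $V_\beta\setminus B(\alpha)$ and $V_\gamma\setminus B(\alpha)$, the inclusion--exclusion count for $b_U$, and the use of compatibility to keep at most one terminal per merged part, reading $s_U$ as the terminal of the extended part as the paper intends.
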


\begin{proof}
As $C_\beta$ and $C_\gamma$ are compatible, the tuple $s$
satisfies the properties required on $s$ in the definition of a configuration; 
as all the other properties are trivially satisfied,
we conclude that the tuple $C_\alpha$ is a configuration.
Now we are going to verify the feasibility of $C_\alpha$.

Let $\CP_\beta\in \Pi(V_\beta)$ and $\CP_\gamma\in \Pi(V_\gamma)$ be extensions of the partition $\CP$ of $B(\alpha)$
to a partition of $V_\beta$ and $V_\gamma$, resp., certifying that $C_\beta$ and $C_\gamma$ are feasible. 
For each $U\in \CP$, by $U_\beta$ ($U_\gamma$, resp.) we denote the unique part 
$U_\beta$ from the partition $\CP_\beta$ ($U_\gamma$ from $\CP_\gamma$, resp.)
satisfying $U_\beta\cap{B(\alpha)}=U$ ($U_\gamma\cap {B(\alpha)}=U$, resp.).
Let $\CP_\alpha$ be the partition of $V_\alpha$ defined by 
\begin{align}
\CP_\alpha= \{U\in \CP_\beta \mid U\cap B(\alpha)=\emptyset \} \cup
\{U\in \CP_\gamma \mid U\cap B(\alpha)=\emptyset \} \cup 
\{U_\beta \cup U_\gamma \mid U\in \CP \} \ .
\end{align}
By construction, $\CP_\alpha$ is an extension of $\CP$. Moreover,
\begin{enumerate}
    \item by construction, for each $U\in \CP_\alpha$, $G[U]$ is connected, and
    \item for each $U\in \CP_\alpha$, if $U\cap B(\alpha)=\emptyset$, then 
    $U\in \CP_\beta\cup \CP_\gamma$, thus, by feasibility of $C_\beta$ and $C_\gamma$,  $E(U,V_\alpha\setminus U)\leq D$, and by their compatibility, $|U\cap \Gamma_\alpha|=1$
    as $\Gamma_\alpha=\Gamma_\beta\cup \Gamma_\gamma$, and
    \item the third and the fourth properties follow from the definition of the tuples $b$ and $a$. 
\end{enumerate}
\vspace{-.45cm}
\end{proof}

Let $\alpha$ be a forget node of $T$ with a child $\beta$ for which $B(\alpha)=B(\beta)\setminus \{v\}$, and let 
$C_\beta=(D, \CP, s, b, a)$ be a feasible configuration of the node $\beta$.
The \emph{forget derivation} from the configuration $C_\beta$ is the configuration 
$C_\alpha=(D', \CP', s', b', a')$ 
defined as follows: 

\begin{itemize}
\item If $\{v\}\in \CP$, then 
$D'=\max\{D,b_{\{v\}} \}$,
$\CP'=\CP\setminus \left\{\{v\}\right\}$,  
and for each $U\in \CP'$, $s'_U=s_U$, $b'_U=b_U$,
and for each $(U,W)\in \binom{\CP'}{2}$, $a'_{(U,W)}=a_{(U,W)}$.

\item Otherwise, 
$D'=D$, $\CP'=\CP|_{B(\alpha)}$, 
for each $U\in \CP'\cap \CP$, $s'_U=s_U$, $b'_U=b_U$, 
for each $(U,W)\in \binom{\CP'\cap \CP}{2}$
and for $Z$ denoting the part of $\CP'$ such that $Z\cup \{v\}\in \CP$, 
$s'_{Z}=s_{Z\cup \{v\}}$, $b'_{Z}=b_{Z\cup \{v\}}$, and
for each $U\in \CP'$, $a'_{(Z,U)}=a_{(Z\cup \{v\},U)}$.
\end{itemize}
\begin{lemma}[Forget configuration derivation]\label{lem:forget}
The forget derivation $C_\alpha=(D', \CP', s', b', a')$ from feasible configuration $C_\beta$ 
is a feasible configuration of the node $\alpha$.
\end{lemma}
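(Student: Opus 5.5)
The plan is to take the partition certifying feasibility of $C_\beta$ and reuse it unchanged. Let $\CP_\beta\in\Pi(V_\beta)$ be an extension of $\CP$ witnessing feasibility of $C_\beta$. For a forget node we have $V_\alpha=V_\beta$ and $G_\alpha=G_\beta$, and $\Gamma_\alpha=\Gamma_\beta$. So I will show that the same partition $\CP_\beta$ also witnesses feasibility of $C_\alpha$ at the node $\alpha$. Since $\CP_\beta|_{B(\alpha)}=(\CP_\beta|_{B(\beta)})|_{B(\alpha)}=\CP|_{B(\alpha)}=\CP'$, the partition $\CP_\beta$ is an extension of $\CP'$. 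Property~1 (connectivity of each part and at most one terminal per part) does not depend on the bag, so it carries over directly.

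I will first check that $C_\alpha$ is a configuration at all. The only nontrivial condition is $U\cap\Gamma_\alpha\subseteq\{s'_U\}$. This holds because each $s'_U$ is copied from $s_U$ or from $s_{Z\cup\{v\}}$, and $Z\cap\Gamma\subseteq(Z\cup\{v\})\cap\Gamma$.

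The rest splits into the two cases of the derivation.

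\emph{Case $\{v\}\in\CP$.} Let $X$ be the part of $\CP_\beta$ with $X\cap B(\beta)=\{v\}$. At the node $\alpha$, $X$ no longer meets the bag, so property~2 must hold for it. By property~3 at $\beta$ its boundary equals $b_{\{v\}}\le D'$. For parts already disjoint from $B(\beta)$ the boundary is at most $D\le D'$.

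The remaining requirement in property~2 is $|X\cap\Gamma_\alpha|=1$. This is the main obstacle. Feasibility of $C_\beta$ only guarantees at most one terminal in $X$, so a terminal-free part could be frozen here. The derivation as stated does not check $s_{\{v\}}\neq\bot$. I would therefore first read the derivation as applying only when $s_{\{v\}}\in\Gamma_\beta$, which is the natural implicit side condition: parts that never receive a terminal can be discarded, since the optimal configuration always satisfies it. I would then use the invariant, which follows from the definition of a configuration together with property~1, that $s_{\{v\}}$ is the terminal of $X$ whenever $X$ contains one. If the intended reading is instead that $s$ records the terminal of the extension, then $s_{\{v\}}\neq\bot$ gives exactly $|X\cap\Gamma_\alpha|=1$. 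For the other parts, properties~3 and~4 transfer verbatim, since $b'$ and $a'$ are copied and the parts and edge sets are unchanged.

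\emph{Otherwise}, $v$ lies in a part $Z\cup\{v\}$ with $Z\neq\emptyset$. Then no part of $\CP_\beta$ changes its status of meeting the bag, so property~2 is inherited with the same $D$. Property~3 holds with $b'_Z=b_{Z\cup\{v\}}$, because the extension of $Z$ is the same set as the extension of $Z\cup\{v\}$ and $V_\alpha=V_\beta$.

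For property~4, $a'_{(Z,U)}=a_{(Z\cup\{v\},U)}$. I interpret the adjacency bit, as the paper's intended meaning indicates, as adjacency of the extensions in $G_\alpha$, which are identical sets in both nodes. That interpretation makes the copy correct. If one insists on the literal definition $E(U,W)$ on bag parts, then the bit could change when $v$'s edges disappear. I would flag this and adopt the extension-based reading, consistent with the join derivation's OR rule.
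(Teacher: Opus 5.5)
Your route is the paper's: keep the witness $\CP_\beta$, use $V_\alpha=V_\beta$ and $\Gamma_\alpha=\Gamma_\beta$, and check conditions 1--4. The paper simply calls the remaining checks ``immediate''. Your case analysis is more careful than the paper's, and the obstacle you flag in the case $\{v\}\in\CP$ is genuine: with the definitions as written, the lemma is false. Take $B(\beta)=V_\beta=\{x,v\}$ with one edge $xv$ and no terminals; this occurs in a nice tree decomposition of the path $t$--$x$--$v$ with terminal $t$. Let $C_\beta=(1,\{\{x\},\{v\}\},s,b,a)$ with $s\equiv\bot$, $b\equiv 1$, $a=1$; it is feasible, certified by $\{\{x\},\{v\}\}$. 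Forgetting $v$ gives $\CP'=\{\{x\}\}$ and $b'_{\{x\}}=1$. No extension of $\{\{x\}\}$ to $\{x,v\}$ certifies this: $\{\{x,v\}\}$ has boundary $0\neq 1$, and $\{\{x\},\{v\}\}$ has a terminal-free part avoiding the bag. So an extra condition is unavoidable, and the paper's one-line proof hides exactly this point.

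However, your first repair does not work. The invariant you invoke does not follow from the definitions. The configuration condition only gives $\{v\}\cap\Gamma_\beta\subseteq\{s_{\{v\}}\}$, which is vacuous for a non-terminal $v$. Feasibility conditions 1--4 never mention $s$, so $s_{\{v\}}$ can be $\bot$ or any terminal of $\Gamma_\beta$, e.g.\ one lying in another part's extension. The invariant also points the wrong way. You need $s_{\{v\}}\neq\bot\Rightarrow X\cap\Gamma_\alpha\neq\emptyset$, whereas ``if $X$ contains a terminal, $s_{\{v\}}$ is it'' yields only the converse. Hence the side condition $s_{\{v\}}\neq\bot$ alone does not save the lemma.

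What works is your second reading, stated as an explicit change of definitions. Require in feasibility that $U'\cap\Gamma_\alpha=\{s_U\}\setminus\{\bot\}$ for every part $U'$ meeting the bag. Apply the singleton case of the forget derivation only when $s_{\{v\}}\neq\bot$. Then your argument is complete: $X\cap\Gamma_\alpha=\{s_{\{v\}}\}$, and the new condition transfers to the surviving parts because $s'$ is copied onto the same extensions.

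Nothing is lost for the algorithm. Optimal configurations are consistent with the new condition by the definition of their $s$. If $\{v\}$ is a singleton part of $\CP^\star|_{B(\beta)}$, then the part $\overline{\{v\}}$ of $\CP^\star$ is connected, contains $v\notin B(\alpha)$, and avoids the separator $B(\alpha)$. So it lies in $V_\alpha$ together with its terminal, giving $s_{\{v\}}\neq\bot$. Be clear that this proves a corrected lemma, and that the strengthened feasibility notion must then be re-checked in the join and introduce lemmas too.

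Your extension-based reading of the adjacency bits is the right one. It is how $a$ is defined for optimal configurations, and under the literal bag-level reading the copy $a'_{(Z,U)}=a_{(Z\cup\{v\},U)}$ would indeed fail.
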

\begin{proof}
Let $\CP_\beta\in \Pi(V_\beta)$ be an extension of $\CP$ that certifies the feasibility of $C_\beta$. As $V_\alpha=V_\beta$, we also have $\CP_\beta\in \Pi(V_\alpha)$. 
Moreover, by construction of $\CP'$, the partition $\CP_\beta$ is also an extension of 
$\CP'$; it is immediate from the definition of $C_\alpha$ that it satisfies all
the remaining requirements of feasibility as well.
\end{proof}

Let $\alpha$ be an introduce node of $T$ with a child $\beta$ for which $B(\alpha)=B(\beta)\cup \{v\}$, and let 
$C_\beta=(D, \CP, s, b, a)$ be a feasible configuration of the node $\beta$.
Further, let
$\CQ\subseteq \CP$ 
be a subset of $\CP$ satisfying properties
\begin{enumerate}[label=\textbf{Q\arabic*.}, leftmargin=2.5em]
\item for each $U\in \CQ$, $\delta(v)\cap U\not = \emptyset$, and 
\item for each $(U,W)\in \binom{\CQ}{2}$, $a_{(U,W)}=0$,  and 
\item $|\bigcup_{U\in \CQ}\{s_U\}\setminus\{\bot\}|\leq 1$,
\end{enumerate}
\smallskip
and let $\bar U=\bigcup_{U\in \CQ}U\cup \{v\}$.
The \emph{introduce derivation from $C_\beta$ and $\CQ$} is the configuration 
$C_\alpha=(D', \CP', s', b', a')$ defined as follows:
\begin{itemize}
\item $D'=D$,
\item $\CP'=(\CP\setminus \CQ)\cup \{\bar U\}$,
\item for each $U\in \CP\setminus \CQ$, $s'_U=s_U$, $b'_U=b_U+|E(U,\{v\})|$, 
\item $s'_{\bar U}\in\bigcup_{U\in \CQ}\{s_U\}$
if $|\bigcup_{U\in \CQ}\{s_U\}\setminus\{\bot\}|= 1$, and $s'_{\bar U}=\bot$ otherwise,
\item $b'_{\bar U}=\sum_{U\in \CQ} b_U + |E(B(\alpha)\setminus \bar U,\{v\})|$, and
\item 
for each $(U,W)\in \binom{\CP\setminus \CQ}{2}$, $a'_{(U,W)}=a_{(U,W)}$, for each
$U\in \CP\setminus \CQ$, $a'_{(U,\bar U)}=\min\{1,\sum_{W\in \CQ}a_{(U,W)} \}$.
\end{itemize}
\begin{lemma}[Introduce configuration derivation]\label{lem:introduce}
Given a feasible configuration $C_\beta=(D, \CP, s, b, a)$, 
for each $\CQ\subseteq \CP$ satisfying the properties Q1, Q2 and Q3, the introduce derivation 
$C_\alpha=(D', \CP', s', b', a')$ from $C_\beta$ and $\CQ$
is a feasible configuration of the node $\alpha$.
\end{lemma}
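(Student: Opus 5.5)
The plan is to mirror the proofs of Lemma~\ref{lem:join} and Lemma~\ref{lem:forget}: take a certificate for $C_\beta$ and modify it explicitly into a certificate for $C_\alpha$. Let $\CP_\beta\in\Pi(V_\beta)$ be an extension of $\CP$ witnessing the feasibility of $C_\beta$. For $U\in\CP$, let $U_\beta$ denote its extension in $\CP_\beta$. Note that $V_\alpha=V_\beta\cup\{v\}$. By the properties of a tree decomposition, every neighbor of $v$ in $G_\alpha$ lies in $B(\alpha)$, since $v$ is new at $\alpha$ and so appears in no bag below it. I define
\[
\CP_\alpha=\{X\in\CP_\beta \mid X\cap B(\beta)=\emptyset\}\cup\{U_\beta\mid U\in\CP\setminus\CQ\}\cup\Bigl\{\{v\}\cup\bigcup_{U\in\CQ}U_\beta\Bigr\}.
\]
By construction this is a partition of $V_\alpha$ whose restriction to $B(\alpha)$ is $\CP'$. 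If $\CQ=\emptyset$, the new part is simply $\{v\}$.

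Next I check the four feasibility conditions in order.

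\textbf{Condition 1 (connectivity and terminals).} Parts inherited from $\CP_\beta$ are unchanged, so they remain connected and contain at most one terminal. For the merged part, each $U_\beta$ with $U\in\CQ$ is connected, and by Q1 it contains a neighbor of $v$; hence the union together with $v$ is connected. By Q3 the parts in $\CQ$ carry at most one terminal in total. The vertex $v$ itself may be a terminal, which makes $\Gamma_\alpha=\Gamma_\beta\cup\{v\}$. This case needs care: one must read Q3 and the definition of $s'$ as including $v$. Concretely, I would require that if $v\in\Gamma$ then no $U\in\CQ$ has $s_U\neq\bot$, and set $s'_{\bar U}=v$. I expect this to be a small gap that has to be handled by an explicit remark.

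\textbf{Condition 2 (parts disjoint from the bag).} Such parts are exactly the parts of $\CP_\beta$ disjoint from $B(\beta)$. They have no neighbor $v$, because $N(v)\cap V_\alpha\subseteq B(\alpha)$. So their boundary within $V_\alpha$ equals their boundary within $V_\beta$, which is at most $D=D'$, and each contains exactly one terminal.

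\textbf{Condition 3 (boundary values).} This is the main computation. For $U\in\CP\setminus\CQ$, the part $U_\beta$ gains exactly the edges between $U$ and $v$, giving $b'_U=b_U+|E(U,\{v\})|$. For the merged part $\bar U$, its boundary is the sum of the boundaries $b_U$ over $U\in\CQ$, adjusted in two ways.

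First, the edges between different $U_\beta,W_\beta$ with $U,W\in\CQ$ would be double-counted and must be removed. Q2 says $a_{(U,W)}=0$, so no such edges exist; this is exactly what Q2 is for.

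Second, the edges from the merged set to $v$ are now internal. These were never counted in the $b_U$, because $v\notin V_\beta$. What must be added are the edges from $v$ to vertices outside $\bar U$. All of these go to $B(\alpha)\setminus\bar U$, which gives $|E(B(\alpha)\setminus\bar U,\{v\})|$.

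\textbf{Condition 4 (adjacency).} Pairs from $\CP\setminus\CQ$ keep their adjacency, since the only new edges are incident to $v$. The symmetric-looking worry here is that $v$ might create new adjacency between $\bar U$ and some $U\in\CP\setminus\CQ$ through an edge $vu$ with $u\in U$. The formula $a'_{(U,\bar U)}=\min\{1,\sum_{W\in\CQ}a_{(U,W)}\}$ ignores such edges. So I would either show the formula should include the term $|E(U,\{v\})|$, or accept the correction: take $a'_{(U,\bar U)}=1$ whenever $E(U,\{v\})\neq\emptyset$. This adjacency bookkeeping, together with the terminal case $v\in\Gamma$, is where I expect the only real obstacle. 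The rest is a routine verification analogous to Lemma~\ref{lem:join}.
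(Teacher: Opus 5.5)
Your certificate is exactly the paper's (keep the parts of $\CP_\beta$ not meeting $\bar U$, merge the others with $v$), and your check of conditions 1--3 is the ``routine verification'' the paper omits, with Q2 used exactly where it is needed.

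The two issues you flag are genuine errors in the paper's definition of the introduce derivation, not gaps in your argument, so the lemma as stated is false in these cases and holds only with your two corrections:
\begin{itemize}
\item Adjacency: $\CQ=\emptyset$ together with an edge $vu$, $u\in U\in\CP$, yields $a'_{(U,\{v\})}=0$, which violates condition 4.
\item Terminals: for $v\in\Gamma$, any $\CQ=\{U\}$ with $U$ adjacent to $v$ and $s_U\neq\bot$ passes Q1--Q3 but creates a part with two terminals.
\end{itemize}
One further point: your use of Q3 in condition 1 tacitly requires that $s_U$ is the terminal of the extension of $U$. The paper states this only as intended meaning, not as a feasibility condition, so it should be added explicitly.
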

\begin{proof}
By construction, $\CP'$ is a partition of $B(\alpha)$.
Let $\CP_\beta\in \Pi(V_\beta)$ be an extension of $\CP$ that certifies the feasibility of $C_\beta$.
Let 
\[
U'=\bigcup_{\substack{U\in\CP_\beta \\U\cap \bar U\not = \emptyset}}U\cup \{v\} \ ,
\]
that is, the set defined by the union of all the parts in $\CP_\beta$ intersecting $\bar U$
together with the vertex $v$,
and let 
$\CP_\alpha$ be the partition of $V_\alpha$ defined by 
\begin{align}
\CP_\alpha= \{U\in \CP_\beta \mid U\cap \bar U=\emptyset \} \cup \{U'\} \ .
\end{align}
Then, $\CP_\alpha$ is an extension of $\CP'$.
A routine verification shows that $\CP_\alpha$ certifies the feasibility of $C_\alpha$.
\end{proof}

For ease of reference, we state the following lemma about configurations of leaves. 
Since the empty bag admits only the empty partition, the claim is immediate.
\begin{lemma}[Leaves’ configurations]\label{lem:leaves}
For each leaf $\alpha$ of $T$, $\CC=(0,\emptyset,\emptyset,\emptyset,\emptyset)$ is a feasible configuration.
Moreover, $\CC$ is an optimal configuration.
\end{lemma}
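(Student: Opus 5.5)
The plan is to check the two claims of Lemma~\ref{lem:leaves} separately, each straight from the definitions. For a leaf $\alpha$ we have $B(\alpha)=\emptyset$ and $V_\alpha=\emptyset$ (a leaf has no descendants, so $V_\alpha$ is just $B(\alpha)$). Hence $\Gamma_\alpha=\emptyset$. The only partition of the empty set is $\CP=\emptyset$. Therefore $s$, $b$ and $a$ are all functions on the empty index sets $\CP$ and $\binom{\CP}{2}$, and each is the empty tuple. The side condition $U\cap\Gamma_\alpha\subseteq\{s_U\}$ holds vacuously.

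One point needs care. The definition requires $D\in[m]$, whereas the tuple in the statement has $D=0$. I would read $[m]$ as $\{0,\ldots,m\}$ here, consistent with the convention $\max\emptyset=0$ adopted for $D_\alpha$; otherwise the optimal configuration of a leaf would not even be a configuration. With this reading, $\CC=(0,\emptyset,\emptyset,\emptyset,\emptyset)$ is a configuration.

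For feasibility I take the extension $\CP'=\emptyset\in\Pi(V_\alpha)$; this is the unique partition of $\emptyset$, and its restriction to $B(\alpha)$ is $\emptyset=\CP$. Conditions 1--3 quantify over parts $U'\in\CP'$, and condition 4 over pairs in $\binom{\CP}{2}$. Both collections are empty, so all four conditions hold vacuously.

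For optimality I compute the quantities defining $C_\alpha$ from the fixed optimal partition $\CP^\star$. First, $D_\alpha$ is a maximum over parts $U\in\CP^\star$ with $U\subseteq V_\alpha=\emptyset$. Parts of a partition are nonempty, so the set is empty and $D_\alpha=\max\emptyset=0$. Second, $\CP_\alpha=\CP^\star|_{\emptyset}=\emptyset$, so $s$, $b$ and $a$ are again empty. Hence $C_\alpha=\CC$. There is no real obstacle in either step; the only thing requiring attention is the range of $D$ together with the empty-maximum convention.
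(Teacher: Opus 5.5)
Your proposal is correct and takes the same approach as the paper, which simply notes that the empty bag admits only the empty partition, so feasibility and optimality hold vacuously. Your remark that $D=0$ forces reading $[m]$ as including $0$ (consistent with the convention $\max\emptyset=0$) correctly points out a small inconsistency in the paper's definition of a configuration.
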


\begin{lemma}[Derivation of optimal configurations]\label{lem:optimal}
1. If $\alpha$ is a join node with children $\beta$ and $\gamma$, and $C_\alpha$, $C_\beta$
and $C_\gamma$ are their optimal configurations, 
then $C_\alpha$ is the join derivation from $C_\beta$ and $C_\gamma$.

2. If $\alpha$ is a forget node with a child $\beta$, and $C_\alpha$ and $C_\beta$
are their optimal configurations,
then $C_\alpha$ is the forget derivation from $C_\beta$. 

3. If $\alpha$ is an introduce node with a child $\beta$, and $C_\alpha=(D',\CP',s',b',a')$ 
and $C_\beta=(D,\CP,s,b,a)$
are their optimal configurations, then there exists a subset $\CQ\subseteq \CP$ satisfying
the properties Q1 - Q3 such that $C_\alpha$ is the introduce derivation 
from $C_\beta$ and $\CQ$.
\end{lemma}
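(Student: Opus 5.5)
The plan is to compare, for each node type, the components of the optimal configurations one by one. We compute both from the fixed optimal partition $\CP^\star$ and check that the derivation formulas reproduce the values prescribed by the definition of optimal configuration. Throughout, we use two structural facts about nice tree decompositions. First, every edge of $G_\alpha$ appears in some bag of $T_\alpha$. Second, by the path property, a vertex of $V_\alpha\setminus B(\alpha)$ has all its neighbours inside $V_\alpha$.

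\textbf{Join node.} Here $V_\beta\cap V_\gamma=B(\alpha)$, and $G_\alpha$ has no edges between $V_\beta\setminus B(\alpha)$ and $V_\gamma\setminus B(\alpha)$. All three optimal configurations share the partition $\CP^\star|_{B(\alpha)}$, so they are compatible.
\begin{itemize}
\item For $s$: the terminal of $\bar U$ inside $V_\alpha$ lies in $V_\beta$ or in $V_\gamma$, so the derivation rule for $s$ recovers it.
\item For $D$: a part of $\CP^\star$ contained in $V_\alpha$ and avoiding $B(\alpha)$ is connected. It therefore lies entirely in $V_\beta\setminus B(\alpha)$ or in $V_\gamma\setminus B(\alpha)$. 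Its boundary in $G_\alpha$ equals its boundary in $G_\beta$ or $G_\gamma$, respectively, which gives $D=\max\{D_\beta,D_\gamma\}$.
\item For $b$: the edges leaving $\bar U\cap V_\alpha$ inside $G_\alpha$ are the union of those leaving it in $G_\beta$ and those leaving it in $G_\gamma$. The only double-counted edges are those with both ends in $B(\alpha)$, namely $E(U,B(\alpha)\setminus U)$. This yields the inclusion--exclusion formula for $b_U$.
\item For $a$: adjacency in $G_\alpha$ holds exactly when it holds in $G_\beta$ or in $G_\gamma$, which is the OR formula.
\end{itemize}

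\textbf{Forget node.} We have $V_\alpha=V_\beta$, so all boundary sizes and adjacencies are unchanged. If $\{v\}$ is a part of $\CP_\beta$, then $\bar{\{v\}}$ meets $B(\beta)$ only in $v$. Since $v\notin B(\alpha)$, the set $\bar{\{v\}}\cap V_\alpha$ is a part avoiding $B(\alpha)$. It must be entirely inside $V_\alpha$: the vertices of $\bar{\{v\}}$ in $V_\alpha$ are connected to the rest of $\bar{\{v\}}$ only through bag vertices, and by the path property the only bag vertex it has is $v$, which is now forgotten. Hence it contributes $b_{\{v\}}$ to $D$, giving $D'=\max\{D,b_{\{v\}}\}$. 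Otherwise, $v$ merges into the part $Z$, and the data is relabelled exactly as in the forget derivation.

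\textbf{Introduce node.} This case is the main obstacle. Let $X$ be the part of $\CP_\alpha$ containing $v$, and set $\CQ=\{U\in\CP : U\subseteq X\}$, i.e. the parts of $\CP$ lying in $\bar X$. Then $\CP'=(\CP\setminus\CQ)\cup\{X\}$. We must verify Q1--Q3 for this $\CQ$.
\begin{itemize}
\item Q3 holds because $\bar X$ contains at most one terminal.
\item Q2: the parts in $\CQ$ are different restrictions of parts of $\CP^\star$ in $G_\beta$, yet all lie in the single part $\bar X$. So the $a$ entries of $C_\beta$ between them must be $0$. The subtle point is that in $C_\beta$ these are distinct parts of $\CP$, even though they all belong to $\bar X$. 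This forces us to read $\CP_\beta$ as the restriction of the partition of $G_\beta$ into connected pieces of $\bar X\cap V_\beta$, rather than literally as $\CP^\star|_{B(\beta)}$. The definition of optimal configuration should be read accordingly, via the components of $\bar U\cap V_\alpha$ in $G_\alpha$. Once this is done, pieces of $\bar X$ that are separate in $G_\beta$ are non-adjacent there, which gives Q2.
\item Q1: each piece in $\CQ$ must be adjacent to $v$. This holds because $\bar X\cap V_\alpha$ is connected in $G_\alpha$, and $v$ is the only new vertex, all of whose edges in $G_\alpha$ go to $B(\beta)$.
\end{itemize}

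The remaining identities are then direct edge counts.
\begin{itemize}
\item For an untouched part $U$, the only new boundary edges are $E(U,\{v\})$.
\item For $X$, the boundary is the sum of the $b_U$ over $U\in\CQ$. Internal edges between pieces of $\CQ$ do not arise in $G_\beta$ by Q2. We then add the edges from $v$ to bag vertices outside $X$.
\item The adjacency of $X$ to another part is the OR over the pieces in $\CQ$, and $v$ itself contributes to this adjacency as well.
\item $D$ and $s$ are unchanged.
\end{itemize}
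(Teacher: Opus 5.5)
Your plan is the paper's own ``routine verification'', with the same choice $\CQ=\{U\in\CP : U\subseteq X\}$ in the introduce case. As written, however, it uses two different definitions of the optimal configuration, and the statement fails under each of them.

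\textbf{The literal reading breaks part 3.} In the join case you use the literal definition $\CP_\alpha=\CP^\star|_{B(\alpha)}$. In the introduce case you switch to the partition of $B(\alpha)$ induced by the connected components of the sets $\bar U\cap V_\alpha$ in $G_\alpha$. The switch is unavoidable. Take the path $x$--$p$--$y$ with terminal set $\{x\}$, and a decomposition whose root is a join node $\alpha$ with bag $\{x,y\}$. Let the child $\beta$ forget $p$ (introduced just below it), and let the other child $\gamma$ introduce $y$ into the bag $\{x\}$. Under the literal reading, the optimal partition at $\gamma$ is $\{\{x,y\}\}$, which forces $\CQ=\{\{x\}\}$. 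Since $xy\notin E$, Q1 fails, so part 3 is false. This configuration is not even feasible, because $G_\gamma$ has no edges.

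\textbf{The component reading breaks part 1.} Under that reading your join argument no longer holds. In the same example, $\CP_\beta=\{\{x,y\}\}$ (the two vertices are connected through $p$), while $\CP_\gamma=\{\{x\},\{y\}\}$. So $C_\beta$ and $C_\gamma$ are not compatible. The join derivation requires $\CP'=\CP''$ and keeps the partition unchanged, so it cannot produce $C_\alpha$. In general, the parent's partition is the finest common coarsening of the children's partitions, because a path in $G_\alpha$ switches between $G_\beta$ and $G_\gamma$ only at bag vertices. Part 1 therefore needs a different join rule, one that merges blocks and recomputes $b$, $a$ and $s$. This is the missing idea, and it is not a routine check.

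\textbf{Two smaller mismatches in the introduce case.} These remain even after the switch.
\begin{itemize}
\item The stated rule $a'_{(U,\bar U)}=\min\{1,\sum_{W\in\CQ}a_{(U,W)}\}$ ignores edges at $v$, which you correctly say contribute to adjacency.
\item $s'_{\bar U}$ is never set to $v$ when $v\in\Gamma$, so ``$s$ unchanged'' is not accurate for the new part.
\end{itemize}
A single edge $uv$ with both endpoints terminals already exhibits both problems. What you are actually verifying is a corrected lemma, so you need to make the corrections explicit, above all the change to the join rule.
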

\begin{proof}
A routine verification shows that the parts 1 and 2 hold. For part 3, let $W\in \CP'$ be the
part of $\CP'$ containing the vertex $v\in V$ introduced in the node $\alpha$, and let 
$\CQ=\{U\in \CP \mid U\subseteq W \}$; then again a routine verification shows that the claim
holds.
\end{proof}

Given Lemmas~\ref{lem:solution}-\ref{lem:optimal}, we are ready to describe our
algorithm and prove its correctness. We start with the leaves of $T$,
and for each of them construct the optimal configuration (Lemma~\ref{lem:leaves}). Then
we procceed in a bottom-up fashion: For each join node $\alpha$ with children $\beta$ and 
$\gamma$, given a set of feasible configurations for $\alpha$ and a a set of feasible 
configurations for $\beta$, for each compatible pair of feasible configurations $C_\beta$ and
$C_\gamma$, we derive a feasible configuration for $\alpha$ by Lemma~\ref{lem:join}.

Similarly, for each forget node $\alpha$ with a child $\beta$, from each feasible configuration
$C_\beta$ of the node $\beta$, we derive a feasible configuration $C_\alpha$
of the node $\alpha$ by Lemma~\ref{lem:forget}.

Further, for each introduce node $\alpha$ with a child $\beta$, for each feasible configuration
$C_\beta=(D,\CP,s,b,a)$ 
of the node $\beta$, and for each subset $\CQ\subseteq \CP$
satisfying properties Q1-Q3, we derive
a feasible configuration of the node $\alpha$ by Lemma~\ref{lem:introduce}.

By Lemmas~\ref{lem:leaves} and~\ref{lem:optimal}, by this process, for each node of the
tree, one of its feasible configurations, constructed by the previously described steps, 
will be the optimal one. Thus, by picking
a feasible configuration $C=(D,\CP,s,b,a)$ of the root for which the value
$\max\{D, \max_{U\in \CP} b_U \}$ is minimal, we get, with the help of Lemma~\ref{lem:solution},
an optimal solution. 

Concerning the running time of the algorithm, we observe that the number
of feasible configurations of a node is at most 
$K=m(\tau+1)^{\tau+1}(k+1)^{\tau+1} m^{\tau+1} 2^{\binom{\tau+1}{2}}$ where $k=|\Gamma|$;
clearly $m$ is an upper bound on cost of an optimal solution.
Thus, the processing time of a join node is $\Oh{K^2}$ at most, the processing time of a forget
node is $\Oh{K}$ at most, and the processing time of an introduce node is $\Oh{K 2^\tau}$ at most.
Hence, the overall running time is polynomial in $m$, with degree linear in $\tau$.
The following theorem summarizes the result.


\begin{theorem}\label{thm:XPalg}
The unweighted min-max connected multiway cut on graphs with treewidth~$\tau$ is solvable
in time $2^{\Oh{\tau^2}}n^{\Oh{\tau}}$.
\end{theorem}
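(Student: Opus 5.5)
The plan is to assemble the dynamic programming over configurations developed above into an algorithm and then bound its running time. First, compute a nice tree decomposition $(T,B)$ of width $\tau$ with $\Oh{n}$ nodes in time $f(\tau)\cdot n$, using the result of Bodlaender and Kloks. Then process the nodes of $T$ bottom-up, maintaining for each node $\alpha$ a set $\mathcal{F}_\alpha$ of feasible configurations, with duplicates removed.

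The sets are built as follows. For a leaf, $\mathcal{F}_\alpha$ consists of the empty configuration (Lemma~\ref{lem:leaves}). For a forget node, apply the forget derivation to every member of the child's set (Lemma~\ref{lem:forget}). For an introduce node, apply the introduce derivation to every member of the child's set together with every subset $\CQ$ satisfying Q1--Q3 (Lemma~\ref{lem:introduce}). For a join node, apply the join derivation to every compatible pair from the two children (Lemma~\ref{lem:join}). These lemmas guarantee that every stored configuration is feasible.

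Correctness comes from an induction along the tree, using Lemmas~\ref{lem:leaves} and~\ref{lem:optimal}: the optimal configuration $C_\alpha$ defined from the fixed optimum $\CP^\star$ lies in $\mathcal{F}_\alpha$ for every node $\alpha$. At the root, $C_\rho$ has all $s_U\neq\bot$, and its value $\max\{D,\max_U b_U\}$ equals the optimal cost, because every part of $\CP^\star$ contains a terminal and is accounted for either in $D$ or in some $b_U$. Conversely, by Lemma~\ref{lem:solution}, every root configuration with all $s_U\neq\bot$ yields a solution of that cost. So the minimum over the root set equals $\CMC(G,\Gamma)$. A witness partition can be recovered by storing back-pointers.

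The running time, which is where I expect the main care to be needed, rests on counting configurations. In the unweighted case every boundary value lies in $\{0,\dots,m\}$, so $D$ and each $b_U$ have at most $m+1$ choices. There are at most $(\tau+1)^{\tau+1}$ partitions of a bag. The map $s$ has at most $(k+1)^{\tau+1}\le (n+1)^{\tau+1}$ choices, and $a$ has at most $2^{\binom{\tau+1}{2}}$. Hence
\[
K\le (m+1)^{\tau+2}(\tau+1)^{\tau+1}(n+1)^{\tau+1}2^{\binom{\tau+1}{2}} .
\]
Here I would assume $m\le n^2$, which holds for simple graphs; with parallel edges one would have to measure in terms of $m$ instead. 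This gives $K = 2^{\Oh{\tau^2}}n^{\Oh{\tau}}$, since $(\tau+1)^{\tau+1}=2^{\Oh{\tau\log\tau}}$.

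Per node, the work is as follows. A join node takes $\Oh{K^2}$ pairs. An introduce node enumerates $2^{\tau+1}$ subsets $\CQ$ per configuration, and checking Q1--Q3 and computing the new entries costs $\mathrm{poly}(\tau,n)$. A forget node costs $\Oh{K}$. Deduplicating each set, for example by sorting, also keeps all sets of size at most $K$. Multiplying by the $\Oh{n}$ nodes gives a total of $2^{\Oh{\tau^2}}n^{\Oh{\tau}}$, which absorbs the $f(\tau)n$ preprocessing once $f$ is taken to be the standard $2^{\Oh{\tau^3}}$ bound from Bodlaender's algorithm. If that term is a concern, use an approximate decomposition of width $\Oh{\tau}$ computed in $2^{\Oh{\tau}}n$ time, which does not change the exponent forms.
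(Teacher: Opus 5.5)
Your proposal follows the paper's proof: the same bottom-up derivation of feasible configurations via Lemmas~\ref{lem:join}--\ref{lem:leaves}, and correctness via Lemmas~\ref{lem:optimal} and~\ref{lem:solution}; restricting the root minimum to configurations with all $s_U\neq\bot$ is a precision the paper leaves implicit. It also uses the same configuration count, giving $\Oh{K^2}$, $\Oh{K}$ and $\Oh{K2^\tau}$ work per join, forget and introduce node. One correction: $2^{\Oh{\tau^3}}n$ is \emph{not} absorbed by $2^{\Oh{\tau^2}}n^{\Oh{\tau}}$ (take $\tau=\log n$), so your fallback of a width-$\Oh{\tau}$ decomposition computed in $2^{\Oh{\tau}}n$ time (or an exact $n^{\Oh{\tau}}$-time algorithm) is required, not optional. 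The paper itself does not account for this preprocessing cost.
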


If the weights are non-negative integers, each edge of weight $w$ can be replaced by $w$ parallel paths, each 
of length two.
This transformation increases the graph size by a factor proportional to
$w$ but does not increase treewidth. Thus, we have a pseudopolynomial algorithm for the weighted version of the problem.

\begin{corollary}\label{cor:XPalg}
The weighted min-max connected multiway cut on graphs with treewidth $\tau$ is solvable
in time $2^{\Oh{\tau^2}}N^{\Oh{\tau}}$ where $N=\sum_{e\in E}w(e)$ is the sum of all edge weights.
\end{corollary}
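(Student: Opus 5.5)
The plan is to reduce the weighted instance to an unweighted one of the same treewidth and then invoke Theorem~\ref{thm:XPalg}. Given $G=(V,E)$ with weights $w:E\to\NN$, construct $G'$ by deleting every edge $e=uv$ and inserting $w(e)$ internally disjoint paths $u-x_{e,j}-v$, $j\in[w(e)]$, each through a fresh non-terminal subdivision vertex $x_{e,j}$. Edges of weight zero are simply removed. The terminal set $\Gamma$ stays unchanged. Then $G'$ has $n+N$ vertices and $2N$ edges, where $N=\sum_{e\in E}w(e)$.

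The first step is to show that the treewidth does not increase; more precisely, $tw(G')\le\max\{tw(G),1\}$. Take a tree decomposition of $G$ of width $\tau$. For each path vertex $x_{e,j}$, pick a node whose bag contains both endpoints $u,v$ of $e$, and attach to it a new leaf node with bag $\{u,v,x_{e,j}\}$. This bag has size $3\le\tau+2$ when $\tau\ge1$. The running intersection property is preserved because $x_{e,j}$ occurs in a single bag, and all edges of $G'$ are covered.

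The second step, which I expect to be the main obstacle, is to show $\CMC(G',\Gamma)=\CMC(G,\Gamma)$ in both directions, taking care with connectivity and with the new vertices.

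Given a connected multiway cut $S_1,\dots,S_k$ of $G$, assign each $x_{e,j}$ to the part of one endpoint of $e$, say the part of $u$. Each part remains connected in $G'$: edges of $G$ inside a part become paths inside that part, and each added $x_{e,j}$ hangs off $u$. When $u,v$ lie in different parts, exactly one edge of each such path is cut, so each boundary equals $\delta(S_i)$.

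Conversely, given a connected multiway cut of $G'$, restrict it to $V$. A path $u-x-v$ contributes to the boundary of a part $S'$ at least as much as the original weight-one unit does. Concretely, if $u\in S'$ and $v\notin S'$, at least one of its two edges leaves $S'$ (or both are counted once in the boundary of $S'$). So boundaries do not grow when the parts are traced back to $G$, although a path vertex assigned to a third part could make them strictly larger in $G'$.

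Connectivity of the restriction needs one more observation. A path vertex $x$ in a part $S'$ whose other vertices lie in $V$ has degree two, so removing it from $G'[S']$ leaves a connected graph. The exception is when $u$ and $v$ are both in $S'$, and then the edge $uv$ exists in $G$ and keeps them connected. Moreover, no part consists solely of path vertices, since each part contains a terminal.

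It follows that optimal values agree, and an optimal solution of $G'$ maps back to one of $G$. Finally, apply Theorem~\ref{thm:XPalg} to $G'$, which has $n+N$ vertices, where $n\le N$ may be assumed after discarding isolated non-terminals (or absorbing $n$ into the bound). This gives running time $2^{\Oh{\tau^2}}(n+N)^{\Oh{\tau}}=2^{\Oh{\tau^2}}N^{\Oh{\tau}}$.
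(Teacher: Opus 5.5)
Your route is the paper's: replace each edge of weight $w(e)$ by $w(e)$ parallel paths of length two, then apply Theorem~\ref{thm:XPalg}. The paper states this in two sentences. Your two-directional argument for $\CMC(G',\Gamma)=\CMC(G,\Gamma)$ is correct and supplies what the paper leaves implicit. In one direction you hang each $x_{e,j}$ on an endpoint's part. In the other you restrict to $V$, note that every path with exactly one endpoint in $S'$ puts at least one edge on the boundary of $S'$, and replace a segment $u-x-v$ inside a part by the edge $uv$.

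Three stated details need repair.

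(i) A bag $\{u,v,x_{e,j}\}$ has width $2$, so your construction gives $tw(G')\le\max\{\tau,2\}$, not $\max\{\tau,1\}$. For example, a single edge of weight $2$ becomes a $4$-cycle. Your inequality $3\le\tau+2$ would only certify width $\tau+1$. This is harmless for the bound $2^{\Oh{\tau^2}}N^{\Oh{\tau}}$ when $\tau\ge 1$. The paper's claim that subdivision ``does not increase treewidth'' is imprecise in the same way.

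(ii) Weight-zero edges cannot simply be deleted. Your forward direction needs every edge inside a part to survive, but a part may be connected only through a weight-zero edge. This is exactly what happens in the reduction of Theorem~\ref{thm:np-hard-bipartite}: each $v_i$ reaches $t_1,t_2$ only via such edges, and deleting them pushes the optimum above $B$ even on yes-instances. Under the paper's definition $w:E\to\NN^+$ this case does not arise, so you can just drop that sentence. If zeros are to be allowed, there are two fixes. You can run the dynamic program of Section~\ref{subsec:xpAlg} directly with weighted values $D,b_U\in\{0,\dots,N\}$. Alternatively, first replace $w$ by $(m+1)w+1$; this preserves optimal cuts because the additive error is at most $m<m+1$.

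(iii) An isolated non-terminal makes the instance infeasible, so it cannot be discarded. Isolated terminals can be discarded, since each forms a singleton part of cost $0$. After removing them, $n\le 2m\le 2N$.
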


\subsection{FPTAS for Bounded Treewidth Graphs} 
In this subsection, we show that even with unbounded weights, the weighted min-max connected 
multiway cut problem can be approximated efficiently on graphs of bounded treewidth. 

\begin{theorem}\label{thm:fptas}
For every $\varepsilon>0$, there is an algorithm for the weighted min-max connected multiway cut on graphs of treewidth $\tau$ that runs in time polynomial in the instance size, and $1/\varepsilon$ and finds a $(1 + \varepsilon)$-approximation solution. The degree of the polynomial is linear in $\tau.$
\end{theorem}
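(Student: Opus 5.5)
We will use the standard rounding-and-scaling technique on top of the pseudopolynomial algorithm of Corollary~\ref{cor:XPalg}. The dependence of that algorithm on the weights is only through the values $D$ and $b_U$ stored in configurations. Rather than subdividing edges, we run the dynamic program of Section~\ref{subsec:xpAlg} directly with integer weights. The number of configurations is then polynomial in the largest possible boundary value, not in $m$. So if the weights are rounded to integers bounded by a polynomial in $n/\varepsilon$, we obtain time $2^{\Oh{\tau^2}}(n/\varepsilon)^{\Oh{\tau}}$.

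\textbf{Guessing the scale.} Sort the distinct edge weights $w_1<\dots<w_r$ (at most $m$ values). For each guess $W\in\{w_1,\dots,w_r\}$ we run one rounded instance, and at the end we output the best resulting partition, evaluated in the original weights. The intended guess is $W^\star$, the largest edge weight on the boundary of some part of the fixed optimal partition $\CP^\star$. Let $\mathrm{OPT}$ denote the optimal cost. Then $W^\star\le \mathrm{OPT}\le mW^\star$, since every part's boundary has at most $m$ edges, each of weight at most $W^\star$.

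\textbf{Rounding for a fixed guess $W$.} Edges of weight greater than $W$ are forbidden from being cut. We contract them: each contracted component must lie inside a single part. If a component contains two terminals, the guess is infeasible. Contraction may increase treewidth, so instead we keep these edges and give them weight $+\infty$, which the DP treats as ``must not be cut'' by discarding any configuration that cuts them. This is easy to enforce, since whether an edge is cut is decided when both endpoints are present in a bag. Set $\mu=\varepsilon W/m$ and $w'(e)=\lceil w(e)/\mu\rceil$ for all remaining edges, so $w'(e)\le \lceil m/\varepsilon\rceil$. Every boundary then has rounded weight at most $m\lceil m/\varepsilon\rceil$, and we may cap the stored values $D,b_U$ at this bound, discarding anything larger. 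The DP with weights $w'$ is correct by the same arguments as Lemmas~\ref{lem:solution}--\ref{lem:optimal}, with the counts $|E(\cdot,\cdot)|$ replaced by weighted sums. It returns a connected multiway cut $C'$ minimizing $\cost_{w'}$ among cuts using only edges of weight at most $W$.

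\textbf{Analysis.} For every set $S$ we have $\mu\,\delta_{w'}(S)\le \delta_w(S)+\mu\,|\delta(S)| \le \delta_w(S)+\varepsilon W$, and also $\delta_w(S)\le \mu\,\delta_{w'}(S)$. For the guess $W=W^\star$, $\CP^\star$ is admissible, so
\[
\cost_w(C')\le \mu\cost_{w'}(C')\le \mu\cost_{w'}(\CP^\star)\le \mathrm{OPT}+\varepsilon W^\star\le(1+\varepsilon)\mathrm{OPT}.
\]
The overall running time is $m$ times a DP with values bounded by $\Oh{m^2/\varepsilon}$. This gives $2^{\Oh{\tau^2}}(m/\varepsilon)^{\Oh{\tau}}$, which is polynomial with degree linear in $\tau$.

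The main obstacle I expect is justifying that the DP works directly with (rounded) integer weights and forbidden edges without subdividing. In particular, I need to check that the $b_U$ update in the join derivation and the introduce derivation remain correct for weighted edge sets, and that the cap on stored values keeps the optimal configuration alive. If forbidding edges turns out to be awkward, the fallback is to give them rounded weight $\lceil m/\varepsilon\rceil\cdot m+1$. Any cut using such an edge then exceeds the rounded value of $\CP^\star$ and is never chosen over it, so the inequality chain above still holds.
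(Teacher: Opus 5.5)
Your proposal is correct. It follows the same rounding-and-scaling template as the paper, but chooses the scale differently, and that choice is what makes the analysis go through.

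The paper uses no guessing and no forbidden edges. It first solves the instance exactly when $\mathrm{OPT}\le m/\varepsilon$. Otherwise it rounds once with $d=\varepsilon W/m$, where $W$ is the \emph{global} maximum edge weight. Its analysis then uses $d\,w'(e)\le w(e)+1$, which need not hold once $d>1$. In general the per-edge error is up to $d$, so the paper's chain only yields $\mathrm{OPT}+\varepsilon W$. The global $W$ can exceed $\mathrm{OPT}$ by an arbitrary factor, for instance through a single huge edge that no good solution cuts. In that case all moderate weights may round to $1$, and the assumption $\mathrm{OPT}>m/\varepsilon$ does not help.

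Your guess of $W^\star$, the heaviest edge cut by $\CP^\star$, together with forbidding heavier edges, supplies exactly the missing inequality $W^\star\le\mathrm{OPT}$. The price is a harmless factor $m$ in running time for trying all guesses. So your route is the sound way to carry out the intended argument.

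Three small remarks:
\begin{itemize}
\item Your fallback, giving forbidden edges rounded weight $m\lceil m/\varepsilon\rceil+1$, already lets you apply Corollary~\ref{cor:XPalg} as a black box. The resulting instance has total weight $O(m^3/\varepsilon)$, so the ``main obstacle'' of reopening the dynamic program disappears.
\item Contracting edges never increases treewidth, since the result is a minor. Contraction would therefore have worked too, and your remark to the contrary is inaccurate, though harmless.
\item If zero weights are allowed, treat the case $W^\star=0$ (i.e., $\mathrm{OPT}=0$) separately, since then $\mu=0$.
\end{itemize}
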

\begin{proof}
The algorithm uses the standard rounding and scaling approach. 

Given $\varepsilon>0$, a graph $G=(V,E)$ of treewidth $\tau$ with a weight function $w:E\to \NN$ and a set of 
terminals $\Gamma\subseteq V$, we start by checking
whether there exists an optimal min-max connected cut of value at most $m/\varepsilon$.
This can be done in time polynomial in $n$ and $1/\varepsilon$ by the algorithm of Corollary~\ref{cor:XPalg}, 
after a straightforward modification of the input instance;
the degree of the polynomial depends linearly on $\tau$. If there is such a cut, we have found an optimal solution in time polynomial in $n$ and $1/\varepsilon$ and we stop.

From now on, we assume that $OPT>m/\varepsilon$. Let $W=\max_{e\in E} w(e)$
and $d=\frac{\varepsilon W}{m}$.
For each edge $e\in E$, we define a new edge weight $w'(e)=\lceil \frac{w(e)}{d}\rceil$.
Note that the new edge weights are integers in the range from $0$ to $\lceil \frac{m}{\varepsilon}\rceil$. 
Now the algorithm from Corollary~\ref{cor:XPalg}
can be used to find, in time polynomial in the instance size $n$ and $1/\varepsilon$, an optimal solution 
for the modified instance where the degree of the polynomial depends linearly on $\tau$; this solution will be presented as an output for the original weighted instance
of the problem.

Consider now the approximation ratio. Let $F\subseteq E$ be an optimal min-max connected cut
and $F_{\max}\subseteq F$ be the largest (w.r.t. to the weights $w$) boundary of a component 
in the optimal solution. Similarly, let $A\subseteq E$ be the connected cut
constructed by the algorithm of Corollary~\ref{cor:XPalg}, and $A_{\max}\subseteq A$ the largest boundary of
a component in this solution. Then, as for each edge $e\in E$, $w(e)\leq d\cdot w'(e)\leq w(e)+1$,
we have
\[
\sum_{e\in A_{\max}}w(e) \leq d\cdot \sum_{e\in A_{\max}}w'(e) \leq d\cdot \sum_{e\in F_{\max}}w'(e)
\leq \sum_{e\in F_{\max}}w(e) + |F_{\max}| \leq OPT + m \leq (1+\varepsilon)\cdot OPT \ .
\]
\end{proof}
\subsection{FPT Algorithm for trees}
In Section \ref{sec:hardness} we proved that the min-max connected multiway cut is weakly NP-hard on graphs of treewidth at least two, and $W[1]$-hard when parameterized by the number of terminals on general graphs. We complement these results now by showing that on trees the problem admits an FPT algorithm when parameterized by the number of terminals.

\begin{theorem}
The weighted min-max connected multiway cut on trees is fixed-parameter tractable with respect to the number of terminals.
\end{theorem}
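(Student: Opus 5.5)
The plan rests on the fact that in a tree, a connected multiway cut is determined by deleting exactly $k-1$ edges ($k=|\Gamma|$) so that each of the $k$ resulting components contains exactly one terminal. Conversely, every such set of $k-1$ deleted edges gives a connected multiway cut. The cost is $\max_i \delta(S_i)$, and $\delta(S_i)$ is just the total weight of the deleted edges incident to component $S_i$.

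First I would reduce to the Steiner subtree $S\subseteq T$, the minimal subtree containing all terminals.
\begin{enumerate}
\item No edge outside $S$ can be deleted, because the side of such an edge away from $S$ would be a component with no terminal.
\item Every vertex outside $S$ therefore joins the part of its nearest vertex of $S$, and contributes nothing to any boundary.
\item So we may delete all vertices outside $S$ and work on $S$, whose leaves are all terminals.
\end{enumerate}
Call a vertex of $S$ \emph{special} if it is a terminal or has degree at least $3$ in $S$. There are at most $k$ terminals and, since $S$ has at most $k$ leaves, at most $k-2$ branching vertices. The special vertices split $S$ into at most $2k-3$ \emph{segments}: maximal paths whose internal vertices are non-special, hence non-terminals of degree $2$.

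The key structural step is that each segment contains at most one deleted edge. If two edges of the same segment were deleted, the component strictly between them would consist only of internal vertices of the segment and would contain no terminal. Moreover, which segments receive a cut completely determines the induced partition of the special vertices, and hence of the terminals, into components. The position of the cut inside a segment only moves non-terminal degree-$2$ vertices between two adjacent components. It does not change which deleted edges are incident to which component.

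The algorithm is then as follows.
\begin{enumerate}
\item Enumerate all subsets $\mathcal{F}$ of segments with $|\mathcal{F}|=k-1$; there are at most $2^{2k-3}$ of them.
\item For each $\mathcal{F}$, check in linear time that cutting one edge in each segment of $\mathcal{F}$ leaves exactly one terminal per component.
\item For a feasible $\mathcal{F}$, the cost equals $\max_i \sum_{\sigma\in \mathcal{F}\text{ incident to }S_i} w(e_\sigma)$, where $e_\sigma$ is the edge chosen in segment $\sigma$. This is monotone non-decreasing in each $w(e_\sigma)$, and the incidence structure does not depend on the choice of $e_\sigma$. So it is optimal to take $e_\sigma$ to be a minimum-weight edge of $\sigma$.
\item Output the best $\mathcal{F}$ over all feasible choices.
\end{enumerate}
The total running time is $2^{\Oh{k}}\cdot n$, which is FPT in $k$ and independent of the weights.

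I expect the main obstacle to be stating precisely the claim that the position of the cut within a segment does not affect which cut edges bound which component. The cases to handle carefully are segments whose endpoints are terminals of degree $2$ in $S$, and segments consisting of a single edge. Once the segment decomposition uses terminals as breakpoints, both cases fit into the same argument.
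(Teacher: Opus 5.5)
Your proposal is correct and is essentially the paper's argument: the paper also prunes to the subtree spanned by the terminals, contracts each maximal path between terminals and branching vertices into a single edge carrying the minimum weight on that path, and brute-forces the resulting tree, which has $O(|\Gamma|)$ edges. Your observations that each segment contains at most one cut edge, and that the cost is monotone in the chosen edge weights while the incidence structure stays fixed, are exactly what justifies the paper's brief claim that connected multiway cuts transfer between the original tree and the contracted one.
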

\begin{proof}
The essence of the proof is the kernelization method. Given an instance of the problem, 
that is, a tree $T=(V,E)$ and a set of terminals $\Gamma\subseteq V$, we first choose any of the terminals as the root of the tree. Without loss of generality, we can assume that every leaf of the tree $T$
is a terminal; otherwise, we remove from $T$ all vertices (and the adjacent edges) that do not 
have any terminal as their descendant.

Assuming that all leaves of $T=(V,E)$ are terminals, let $W$ be the subset consisting of all
terminals and all vertices of degree at least three. We say that two vertices $u,v\in W$ are
\emph{neighboring} if the unique path between $u$ and $v$ in $T$ does not contain any vertex
from $W$ as an internal vertex. For any two neighboring vertices $u$ and $v$, we replace the
path between $u$ and $v$ by a single edge, and the weight of the edge is set to the minimum 
weight of the edges on the original $u-v$ path in $T$. Let $T'=(V',E')$ denote the resulting
tree. 

We observe two things:
\begin{itemize}
\item A connected multiway cut for $T$ can be easily transformed into a connected
multiway cut for $T'$ of the same value, and vice versa.
\item The tree $T'$ has at most $2|\Gamma|-1$ vertices and at most $2|\Gamma|-2$ edges. 
\end{itemize}

Thus, we can solve the weighted min-max connected multiway cut problem in $T'$ optimally in time 
$\Oh{2^{2|\Gamma|}}$, and from the optimal solution obtain an optimal solution for $T$.    
\end{proof}

\section{Concluding Remarks}\label{sec:concl}




Motivated by its close connection to the well-studied Spanning Tree Congestion problem, 
we have proposed a natural variant of the multiway cut problem 
where each part in the partition is required to be connected, and 
the objective is to minimize the maximum cost of a part rather than the total cost.
We shown various hardness results for the 
problem and complement them by giving algorithms in various cases.

We see that the problem is weakly NP-hard on graphs with treewidth at least two, and when parameterized by the treewidth the problem admits a ``weakly XP'' algorithm, that is, an algorithm with running time $2^{\Oh{\tau^2}}N^{\Oh{\tau}}$ where $\tau$ is the treewidth and $N$ is the input size when weights are encoded in unary. Some dependence on $\tau$ in the exponent of $N$ is perhaps unavoidable because the problem is shown to be $W[1]$-hard when parameterized by the treewidth.

One gap in our results concerns trees. We only have a pseudopolynomial algorithm for the weighted problem on trees but we do not know whether on trees the min-max connected multiway problem is weakly NP-hard. 

Another important open problem concerns general graphs: to design an approximation algorithm with a non-trivial approximation ratio and to establish corresponding hardness results.


\bibliography{connected}
\end{document}